   \newtheorem{theorem}{Theorem}
   \newtheorem{corollary}{Corollary}
   \newtheorem{lemma}{Lemma}
   \newtheorem{definition}{Definition}
\begin{document}


\title{Reconstruction of Quantum Fields: CCR, CAR and Transfields}

\author{Nicolás Medina S\'anchez}
\email{nicolas.medina.sanchez@univie.ac.at}
\affiliation{University of Vienna, Faculty of Physics, Vienna Center for Quantum Science and Technology (VCQ), Boltzmanngasse 5, 1090 Vienna, Austria}
\affiliation{University of Vienna, Vienna Doctoral School in Physics, Boltzmanngasse 5, 1090 Vienna, Austria}
\author{Borivoje Daki\'c}
\email{borivoje.dakic@univie.ac.at}
\affiliation{University of Vienna, Faculty of Physics, Vienna Center for Quantum Science and Technology (VCQ), Boltzmanngasse 5, 1090 Vienna, Austria}
\affiliation{Institute for Quantum Optics and Quantum Information (IQOQI), Austrian Academy of Sciences, Boltzmanngasse 3, 1090 Vienna, Austria.}

\date{\today}

\begin{abstract}

One of the traditional ways of introducing bosons and fermions is through creation--annihilation algebras. Historically, these have been associated with emission and absorption processes at the quantum level and are characteristic of the language of second quantization. In this work, we formulate the transition from first to second quantization by taking quotients of the state spaces of distinguishable particles, so that the resulting equivalence classes identify states that contain no information capable of distinguishing between particles, thereby generalising the usual symmetrisation procedure. Assuming that the resulting indistinguishable-particle space (i) admits an ordered basis compatible with how an observer may label the accessible modes, (ii) is invariant under unitary transformations of those modes, and (iii) supports particle counting as a mode-wise local operation, we derive a new class of creation–annihilation algebras. These algebras reproduce the partition functions of \emph{transtatistics}, the maximal generalisations of bosons and fermions consistent with these operational principles.

\end{abstract}

\maketitle
\section{Introduction}
A standard textbook presentation of the transition from the quantum theory of a single particle to that of many particles proceeds by first forming the Hilbert space
\[
T(H)=\bigoplus_{n\ge 0} H^{\otimes n},
\]
which describes distinguishable particles whose single--particle spaces are isomorphic. Indistinguishability is then imposed by projecting onto the symmetric or antisymmetric subspaces, yielding the usual occupation--number description in which only the number of particles per accessible mode is relevant. In this traditional formalism, the exclusivity of Bose and Fermi statistics is assumed rather than derived, and any alternative statistical behaviour is excluded \emph{a priori}. Nonetheless, a variety of classical and recent works have exhibited consistent generalisations of statistics, including parastatistics \cite{Green53,Greenberg1964,HARTLE69,ryan1963,stolt1970,stoilova,WangHazzardProvided}, quons \cite{Greenberg1991}, Gentile statistics \cite{Gentile40}, and several topological or combinatorial extensions \cite{dewitt,myrheim,Wilczek82,Read_2003,fredenhagen89,Surya_2004,Balachandran_2001,Baez06,vanenk2019,ZhangHanssonKivelson1989,Witten1989,Haldane08,Laughlin1983}. These are typically introduced by \emph{ad hoc} modifications of commutation relations, rather than through a structural analysis of indistinguishability.

In this work we develop a systematic treatment of the passage from distinguishable to indistinguishable particles based on operational constraints. We again take as starting point the algebra $T(H)$ whose elements represent labelled multi--particle states, following Segal's tensor-algebra formulation of second quantization \cite{Segal1956a,Segal1956b,Segal1958}. Indistinguishability is implemented by an equivalence relation that removes precisely the information operationally inaccessible for distinguishing one constituent from another. The quotient
\[
\mathcal{F} := T(H)/I
\]
then plays the role of a Fock space, where the ideal $I$ encodes which labelled states are identified and thus determines the algebraic rules obeyed by creation and annihilation operators.

To connect this with experimental structure, we assume a factorization
\[
H \cong E \otimes K,
\]
where $E$ comprises the degrees of freedom accessible to the experimenter (for example the external modes), while $K$ contains internal or hidden degrees of freedom as in treatments of indistinguishable particles with internal structure \cite{Benatti2020,Morris2020,NeoriGoyal2012,Neori16,MedinaSanchezDakic2024}. On this decomposition we impose three requirements:

\begin{enumerate}
\item \textbf{Homogeneity.}  
The ideal $I$ respects the natural grading of $T(H)$, so that $\mathcal{F}$ retains a well defined particle--number decomposition.

\item \textbf{Ordered--monomial basis.}  
Fixing an ordering on a basis of $E$, the ordered monomials
\[
\psi_{i_1} \otimes \cdots \otimes \psi_{i_n}, \qquad i_1 \le \cdots \le i_n,
\]
descend to a basis of $\mathcal{F}$. This expresses that the experimenter may label accessible modes arbitrarily and that these labels induce an ordering compatible with indistinguishability.

\item \textbf{Unitary invariance.}  
The ideal $I$ is invariant under the natural action of $U(E)$ on the accessible degrees of freedom, ensuring that experimentally realisable unitary transformations act consistently on equivalence classes.
\end{enumerate}

Condition~2 has a strong algebraic consequence: it forces $I$ to be \emph{quadratically generated}. Thus all relations needed to describe indistinguishability arise already at the level of two--particle states, placing our construction within the general theory of quadratic and Koszul algebras \cite{Priddy1970,PolishchukPositselski2005,BeilinsonGinzburgSoergel1996,LodayVallette2012,Davydov}. This mirrors the bosonic and fermionic cases and contrasts with other generalised statistics whose defining relations may be cubic or of higher order \cite{Green53,Greenberg1964,Greenberg1991,Greenberg90,Fivel90,CHEN_1996}.

Condition~3 then determines which quadratic relations are admissible. Since the action of $U(E)$ is carried entirely by the accessible factor, the invariant quadratic subspaces of $H \otimes H$ are precisely
\[
\mathrm{Sym}^{2}(E)\otimes W_{\mathrm{sym}}
\;\oplus\;
\wedge^{2}(E)\otimes W_{\mathrm{ext}},
\]
for subspaces $W_{\mathrm{sym}}, W_{\mathrm{ext}} \subset K \otimes K$. These internal subspaces are realised as images of projectors on $K \otimes K$, and the existence of an ordered basis (what we called the Poincar\'e--Birkhoff--Witt property) imposes Yang--Baxter type identities on those projectors, connecting our construction to $R$-matrix and reflection-equation algebras \cite{Yang1967,Drinfeld1988,Jimbo1985,Baxter1972,FaddeevReshetikhinTakhtajan1990,Gurevich1991,GurevichPyatovSaponov1996,Majid1993,Majid2000,BM2000,Davydov,BozejkoKummererSpeicher1997}

Within this class of quadratic algebras, which we call PBW \(U(d)\)-equivariant algebras, we obtain a complete characterisation of the admissible single-mode partition functions. Our main result (Theorem~2 or \emph{Field Theorem}) states that a formal power series \(G(t) \in \mathbb{Z}_{\ge 0}[[t]]\) arises as the single-mode Hilbert–Poincaré series of such an algebra if and only if it is a rational function of the form
\[
G(t) = \frac{Q_-(t)}{Q_+(t)},
\]
where \(Q_\pm(t) \in \mathbb{Z}[t]\), \(Q_+(0)=1\), and all roots of \(Q_+\) (resp. \(Q_-\)) are real and strictly positive (resp. strictly negative). This coincides precisely with the class of partition functions identified by the operational \emph{Partition Theorem} developed in \cite{MedinaSanchezDakic2024}.

The quadratic data determine the pure creation and pure annihilation algebras. To obtain their mixed relations and thus a full creation--annihilation algebra, one introduces an exchange map
\[
C \colon H^{\ast} \otimes H \longrightarrow H \otimes H^{\ast},
\]
whose form is fixed by the creation and annihilation projectors together with the vacuum two--point function (equivalently, the inner product on $H$). The resulting algebras admit natural representations of $\mathfrak{gl}(d)$, providing a complete quantum mechanical framework based on these generalised structures.

The paper is organised as follows. In Section~\ref{sec:from_first_to_second} we formulate the transition from first to second quantization as a quotient of the tensor algebra and recall Segal's theorem in this language. Section~\ref{sec:quadratic_spaces} introduces the quadratic quotients associated with generalised statistics, proves that order implies quadratic generation, and classifies \(\mathrm{U}(d)\)--invariant quadratic subspaces. We also derive Yang--Baxter constraints and describe the computation of the single--mode partition function. Section~\ref{sec:creation_annihilation} develops the corresponding creation and annihilation algebras and their mixed sector, and constructs the transfield representation of \(\mathfrak{gl}(d)\). In Section~\ref{sec:example} we work out a concrete finite--order example. Section~\ref{sec:outlook} concludes with some remarks and open problems. All proofs are collected in the appendix.

\section{From first to second quantization as a quotient}\label{sec:from_first_to_second}

\subsection{Operational indistinguishability and information loss}

Consider a family of systems that, at the single--particle level, are described by the same Hilbert space \(H\). An individual particle state is represented by a unit vector \(\psi\in H\), and a collection of \(n\) labelled particles is described by a vector in \(H^{\otimes n}\). The labels encode individuality: the factor \(H\) in each tensor position corresponds to a specific system, and permutations of the factors implement relabellings.

If we now assume that the particles are operationally indistinguishable, then no experiment is able to track which label occupies which position in the tensor product. In particular, any two states whose reduced density matrices on the algebra of observables coincide for all experiments must be identified. This leads to an equivalence relation on \(T(H)\), the space of states for distinguishable particles. In that sense, the many--particle Hilbert space is then defined as the quotient
\[
F \;=\; T(H)\big/ I,
\]
where \(I\subset T(H)\) is a two--sided ideal capturing the information that is lost when passing from labelled to indistinguishable particles.

In the standard approach, indistinguishability is represented by the action of the symmetric group \(S_{n}\) on \(H^{\otimes n}\), and the subspace of physical states is obtained by projecting onto the totally symmetric or totally antisymmetric subspace. This can be rephrased as choosing \(I\) to be the ideal generated by the antisymmetric or symmetric components in degree two, respectively. The equivalence relation introduced in that case is the one that corresponds to the identification of states that only differ on a permutation of the labels. Depending on how the group $S_n$ is acting on the states, we will get either the symmetric or antisymmetric states. 

\subsection{Fock space as a quotient algebra:}

The tensor algebra
\[
T(H) \;=\; \bigoplus_{n\geq 0} H^{\otimes n}
\]
carries a natural \(\mathbb{N}_{0}\)--grading by tensor degree, which determines the number of particles. We assume that the ideal \(I\) is homogeneous,
\[
I \;=\; \bigoplus_{n\geq 0} I_{n}, \qquad I_{n} \subset H^{\otimes n},
\]
so that the quotient inherits a grading
\[
F \;=\; \bigoplus_{n\geq 0} F_{n}, \qquad F_{n} \;=\; H^{\otimes n} / I_{n}.
\]
In this way the transition from distinguishable to indistinguishable particles does not change the number of particles, as expected. Additionally we see that the degree--zero component \(F_{0}\) is one--dimensional, spanned by the vacuum \(|0\rangle\), and the degree--one component \(F_{1}\) is naturally isomorphic to the one-particle space \(H\).

To construct creation operators we choose a specific basis \(\{X_{a}\}_{a\in A}\) of \(H\), where \(A\) is a finite index set, and we denote by the same symbols the elements of degree one in \(T(H)\) given that $H$ descends to $F_1$ after the quotient. We can then build the algebra of all linear combinations of words written in terms of the objects \(\{X_{a}\}_{a\in A}\), or differently, the free unital algebra on the alphabet \(\{X_{a}\}\) where multiplication is given by concatenation of words, that we know is isomorphic to \(T(H)\). \\
A basis element of this algebra then can be written for example as $X_3\,X_1^3\,X_2\,X_7^9\,X_3$ where we have omitted the tensor product between the terms for convenience. In this way we can define creation operators simply using left multiplication on $T(H)$ and then projecting using the equivalence relations:
\[
\hat{X}_{a}^{\dagger}\colon F \to F, \qquad \hat{X}_{a}^{\dagger} [w] \;=\; [X_{a} w],
\]
where \(w\in T(H)\) and \([\cdot]\) denotes the projection to the quotient \(F\). Any many--particle state can be written as a linear combination of classes of words in the alphabet \(\{X_{a}\}\) acting on the vacuum:
\[
[w] \;=\; \hat{X}_{a_{1}}^{\dagger}\cdots \hat{X}_{a_{n}}^{\dagger} |0\rangle.
\]
The structure of the quotient is thus encoded in the relations among these words, equivalently in the form of the ideal \(I\).

\subsection{Segal's theorem}
The construction we employ was used by Segal, who asked under which structural conditions the quotient procedure leads inevitably to the familiar bosonic or fermionic algebras. His
starting point was to impose stability requirements on the ideal that generates the
equivalence relation.

\begin{definition}[Characteristic and fully characteristic ideals]
Let $\mathrm{Aut}_0\bigl(T(H)\bigr)$ denote the subgroup of algebra automorphisms of
$T(H)$ induced by number-preserving unitaries on $H$. An ideal $I \subset T(H)$ is
\emph{characteristic} if
\[
\varphi(I) = I
\qquad\text{for all }\varphi \in \mathrm{Aut}_0\bigl(T(H)\bigr),
\]
and \emph{fully characteristic} if
\[
\psi(I) = I,
\,\forall\, \psi:T(H)\rightarrow T(H)\text{ commuting with }\mathrm{Aut}_0\bigl(T(H)\bigr).
\]
\end{definition}

Segal then considered ideals that remain invariant not only under these automorphisms
but also under the action of the creation operators themselves, and that are maximal
with respect to these invariance properties. The result is the following characterisation.

\begin{corollary}[Segal]\label{cor:Segal}
Let $\mathcal{V}$ be a subspace of the algebra of covariant tensors over a
complex Hilbert space $H$ that is
\begin{enumerate}
    \item characteristic;
    \item invariant under the creation operators $C(x,A)$;
\end{enumerate}
and assume that $\mathcal{V}$ is maximal among subspaces with these properties
that do \emph{not} contain all tensors from some rank onward.
Then $\mathcal{V}$ is either the symmetric ideal or the skew--symmetric ideal.
\end{corollary}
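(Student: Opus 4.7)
The plan is to classify the admissible $\mathcal{V}$ by a degreewise analysis based on Schur--Weyl duality, showing that the full answer is already encoded at degree two and is propagated uniquely by creation invariance. I would first decompose each homogeneous component as
\[
H^{\otimes n} = \bigoplus_{\lambda\vdash n} V_\lambda \otimes S_\lambda,
\]
with $V_\lambda$ the irreducible $\mathrm{U}(H)$--module and $S_\lambda$ the Specht module associated with the partition $\lambda$. The characteristic hypothesis makes $\mathcal{V}_n$ a $\mathrm{U}(H)$--submodule, so by Schur's lemma $\mathcal{V}_n = \bigoplus_\lambda V_\lambda \otimes U_\lambda^{(n)}$ for some subspaces $U_\lambda^{(n)} \subset S_\lambda$. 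Invariance under the creation operators, i.e.\ $H\otimes \mathcal{V}_n \subset \mathcal{V}_{n+1}$, combined with the Pieri decomposition $H\otimes V_\lambda = \bigoplus_{\mu=\lambda+\square} V_\mu$ and the branching inclusion of $S_n$--modules $S_\lambda \hookrightarrow S_\mu$ for $\mu=\lambda+\square$, then yields the recursion that the image of $U_\lambda^{(n)}$ inside $S_\mu$ must lie in $U_\mu^{(n+1)}$ for every $\mu$ obtained from $\lambda$ by adding a box.

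At degree two one has $H\otimes H = V_{(2)}\oplus V_{(1,1)} = \mathrm{Sym}^2(H)\oplus \wedge^2(H)$ with each summand $\mathrm{U}(H)$--irreducible and of multiplicity one, so $\mathcal{V}_2$ lies in $\{0,\,\mathrm{Sym}^2(H),\,\wedge^2(H),\,H\otimes H\}$. I would eliminate the case $\mathcal{V}_2 = H\otimes H$ immediately, since creation then forces $\mathcal{V}_n = H^{\otimes n}$ for all $n\ge 2$, contradicting the cofiniteness hypothesis. For $\mathcal{V}_2 = \wedge^2(H) = V_{(1,1)}$, Pieri and the recursion above fill in $V_\lambda$ at every higher level for each $\lambda$ reachable from $(1,1)$ by adding boxes -- namely all $\lambda \neq (n)$ -- while any enlargement of $\mathcal{V}_n$ by the remaining row partition $V_{(n)}$ would make $\mathcal{V}_n = H^{\otimes n}$ and propagate to everything. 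Maximality therefore pins $\mathcal{V}$ down as the symmetric ideal $\bigoplus_{\lambda\neq (n)} V_\lambda\otimes S_\lambda$, and the mirror analysis with columns in place of rows gives the skew-symmetric ideal when $\mathcal{V}_2 = \mathrm{Sym}^2(H)$.

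The remaining case $\mathcal{V}_2 = 0$ is ruled out by maximality, because any admissible extension with trivial degree-two part is strictly contained in one of the two ideals above, each of which itself satisfies all three conditions. The main obstacle I anticipate is the bookkeeping in the propagation step, in particular confirming that in the maximal case every $U_\lambda^{(n)}$ is either $0$ or all of $S_\lambda$, and that the branching embeddings line up correctly so that $\mathcal{V}$ equals the full two-sided symmetric or skew-symmetric ideal rather than a proper one-sided analogue generated only by left multiplication from the degree-two part.
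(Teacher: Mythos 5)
The paper itself does not prove this statement: it is imported verbatim from Segal's work on tensor algebras, with the proof deferred to the cited references, so there is no internal argument to compare yours against. On its own terms, your Schur--Weyl skeleton is the natural one, and the parts you carry out are correct: the characteristic hypothesis does force each $\mathcal{V}_n$ to have the form $\bigoplus_\lambda V_\lambda\otimes U_\lambda^{(n)}$, the degree-two trichotomy is right, and the exclusion of $\mathcal{V}_2=H\otimes H$ by cofiniteness is fine.

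The gap is the one you flag at the end, and it is a genuine missing idea rather than bookkeeping. Creation invariance is closure under left multiplication by $H$ only, so from $\mathcal{V}_2=\wedge^2H$ it yields in degree $n$ only $H^{\otimes(n-2)}\otimes\wedge^2H$, which meets each block $V_\lambda\otimes S_\lambda$ in $V_\lambda\otimes a_\lambda(S_\lambda)$ with $a_\lambda$ the antisymmetrizer over the last two slots --- in general a proper subspace of $S_\lambda$. Your maximality step therefore needs the prior containment $\mathcal{V}\subset J^{\mathrm{sym}}$ (the two-sided ideal generated by $\wedge^2H$, i.e.\ $\mathcal{V}_n\cap\mathrm{Sym}^nH=0$ for \emph{every} $n$) before maximality can force equality, and you only verify this in degree two; the claim that ``any enlargement by $V_{(n)}$ would make $\mathcal{V}_n=H^{\otimes n}$'' presupposes that $\mathcal{V}_n$ already equals the full complement $(\mathrm{Sym}^nH)^{\perp}$, which is exactly what has not been established. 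Concretely, the characteristic, creation-invariant subspace generated by $\wedge^2H$ in degree two together with $\mathrm{Sym}^3H$ in degree three has degree-$N$ component $\mathrm{im}\bigl(1^{\otimes(N-2)}\otimes a\bigr)+\mathrm{im}\bigl(1^{\otimes(N-3)}\otimes s\bigr)$, and a short computation in $\mathbb{C}[S_3]$ shows this misses part of every standard-representation block, so it is not cofinally full; it is admissible under the hypotheses as you use them, yet is contained in neither candidate ideal. The same objection defeats your one-line dismissal of the case $\mathcal{V}_2=0$ (take $\mathcal{V}_3\supset\mathrm{Sym}^3H\oplus\wedge^3H$). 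To close the argument you must either invoke the stronger ``fully characteristic'' condition --- which by the double-commutant form of Schur--Weyl forces each $U_\lambda^{(n)}$ to be $0$ or all of $S_\lambda$ and makes the propagation immediate --- or read Segal's creation operators as acting by insertion in arbitrary position, or else supply a real combinatorial argument that such mixed configurations force cofinal fullness. As written, the proof does not reach the conclusion.
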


Segal's theorem draws a sharp boundary: if one insists on maximality and full invariance of the ideal under the creation operators themselves, the only possible statistics are Bose and Fermi. This is a powerful rigidity result, but its strength comes precisely from the stringency of its hypotheses. In particular, the maximality condition privileges the largest possible identification of states, collapsing the ideal to one of two extremes. If instead one is willing to consider ideals that are invariant under mode transformations but not necessarily maximal, a much richer landscape of consistent statistics opens up.\\
The question then becomes structural: among all homogeneous ideals compatible with the operational requirements laid out in the introduction, which ones lead to well-defined many-particle spaces with a sensible physical interpretation? The answer, as we now show, is that the three conditions imposed on the Fock space (a graded particle-number decomposition, an ordered monomial basis, and invariance under unitary mode transformations) are not only physically natural but also algebraically decisive. Together they force the ideal to be quadratically generated, placing the entire construction within the well-developed framework of quadratic and Koszul algebras

\section{Quadratic spaces and generalised statistics}\label{sec:quadratic_spaces}

We now introduce the algebraic framework that realises certain transtatistics~\cite{MedinaSanchezDakic2024} as quadratic quotients of the tensor algebra.

\subsection{Internal degeneracies and quantum grammar}
We begin by fixing a factorization \(H\cong \mathbb{C}^{d}\otimes K\) of the 1-particle space, with $E$ the external degrees of freedom and \(K\) certain internal degeneracy space. Let \(\{e_{i}\}_{i=1}^{d}\) be the standard basis of \(\mathbb{C}^{d}\) and let \(\{f_{\alpha}\}_{\alpha=1}^m\) be a basis of \(K\). We define in this way generators
\[
X_{i\alpha} \;:=\; e_{i}\otimes f_{\alpha} \in H.
\]
Along the lines of what we presented in the introduction, we set \(\{X_{i\alpha}\}\) as the alphabet that will be associated with elementary creation operators. Words in this alphabet represent labelled many--particle configurations, and relations among these words define a ``quantum grammar'' for the statistics: they specify which sequences of creation operators are admissible and which are identified or annihilated once the quotient is performed.

The external index \(i\) labels the external modes and transforms by a representation of \(\mathrm{U}(d)\), while the internal index \(\alpha\) labels hidden degrees of freedom and transforms in a way we do not control in principle. We will not fix the latter action henceforth. Instead, it will be shaped by the quadratic relations.
\medskip

The restriction to quadratic relations admits a direct conceptual justification in terms of indistinguishability. In the standard formulation, indistinguishability is implemented by requiring invariance of the state under the action of the symmetric group, whose elements are generated by pairwise transpositions. Thus, the indistinguishability of an arbitrary number of particles is completely determined by the consistency of exchanges at the level of particle pairs. In the present framework, this principle is reflected algebraically by requiring that all relations encoding indistinguishability arise from constraints in the two-particle sector. Higher-degree relations would introduce additional structure not dictated by pairwise exchange, and would therefore go beyond the minimal requirement of indistinguishability. In this sense, quadratic generation captures precisely the idea that indistinguishability is a pairwise notion, and ensures that the extension to many-particle states is governed entirely by consistent two-body exchanges.

\subsection{Order implies quadratic generation}

We now state and use the key structural result mentioned in the introduction.

\begin{lemma}\label{lem:quadratic}
Let \(T = \bigoplus_{n\geq 0} T_{n}\) be a graded free algebra on finitely many generators, and let \(I\triangleleft T\) be a homogeneous two--sided ideal. Suppose that, for a fixed total order on the generators, the classes of ordered monomials form a homogeneous basis of the quotient \(F = T/I\). Then \(I\) is generated in degree two.
\end{lemma}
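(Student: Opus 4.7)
The plan is to show that the ideal $J := \langle I_2 \rangle$, generated by the degree-two part of $I$, coincides with $I$. Since $J \subseteq I$ is automatic and both ideals are homogeneous, the canonical graded surjection $\pi\colon T/J \twoheadrightarrow T/I = F$ reduces the problem to showing that $\pi$ is injective in every degree. My strategy is to establish that the classes of ordered monomials form a basis of $T/J$ as well. Linear independence is free: any linear relation among them in $T/J$ descends under $\pi$ to a relation in $F$, which is trivial by hypothesis.

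For spanning I would induct on the degree $n$. The base case $n=2$ is immediate from $J_2 = I_2$, and it supplies, for every out-of-order pair $(a,b)$ with $a>b$, a straightening rule
\[
X_a X_b \;\equiv\; \sum_{c \leq d} \gamma^{cd}_{ab}\, X_c X_d \pmod{J},
\]
whose coefficients are those uniquely expressing $[X_a X_b]$ in the ordered basis of $F_2$. For $n \geq 3$, I would write $w = X_{i_1}\cdot u$ with $\deg u = n-1$, use the inductive hypothesis to reduce $u$ to an ordered combination modulo $J_{n-1}$, and then repeatedly apply the quadratic rules to merge the leading letter $X_{i_1}$ into the ordered tail. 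Once spanning is in place, $\pi$ sends a basis to a basis bijectively in every degree, so $\pi$ is an isomorphism and $I=J$.

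The delicate part is the termination of the quadratic straightening: naive invariants such as the inversion number or a lexicographic rank on index sequences need not decrease monotonically, because $\sum\gamma^{cd}_{ab} X_c X_d$ may introduce indices beyond those present in the original $w$. I would handle this via a Diamond-Lemma-type argument: viewing the straightening rules as a rewriting system whose normal forms are the ordered monomials, the global PBW hypothesis on $F$ provides exactly the confluence data needed to close every two-generator overlap, which yields a unique normal form for each monomial and hence spanning. A more elementary alternative is to combine the primary induction on $n$ with a secondary induction using a carefully chosen well-order on monomials of fixed degree, for instance a weighted-suffix order compatible with the straightening rules.
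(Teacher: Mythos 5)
Your overall strategy is the same as the paper's: set $J=\langle I_2\rangle\subseteq I$, note that the ordered monomials stay linearly independent in $T/J$ because any dependence would push forward along the surjection $T/J\to T/I$, and reduce the whole lemma to showing that the degree-two relations already straighten an arbitrary monomial, i.e.\ that the ordered monomials \emph{span} $T/J$. Your framing via the graded surjection $\pi$ is cleaner than the paper's proof by contradiction, and you have correctly isolated the only nontrivial point.

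That point, however, is left open, and neither of your two suggested repairs closes it. Bergman's Diamond Lemma needs, besides resolvability of overlaps, a semigroup well-order on monomials compatible with the rules $X_aX_b\to\sum_{c\le d}\gamma^{cd}_{ab}X_cX_d$ (for $a>b$); precisely because the right-hand side may contain ordered monomials that are lexicographically \emph{larger} than $X_aX_b$ (your own example, a rule of the shape $X_2X_1\to X_3X_3$), such an order need not exist, and this is what blocks termination. What the basis hypothesis on $F$ does buy you is that \emph{if} two reduction paths both terminate in ordered form then their outputs agree, since the difference of the outputs lies in the intersection of $I_n$ with the span of the ordered monomials, which is zero; it does not buy you that any reduction path terminates, and termination is exactly the spanning statement $T_n=V_n+J_n$ you need. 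Your second alternative, ``a carefully chosen well-order compatible with the straightening rules,'' is a restatement of the problem rather than a solution, since exhibiting such an order is equivalent to proving termination. To be fair, the paper's own appendix rests on the same unproved assertion (it explicitly \emph{assumes} that the rewriting system generated by the degree-two relations is confluent, and tacitly that it terminates), so you have located the crux correctly; but as written your argument, like the paper's, establishes $I=\langle I_2\rangle$ only modulo the claim that every monomial is congruent mod $\langle I_2\rangle$ to a linear combination of ordered monomials.
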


The intuition behind the proof follows a physical line of reasoning. The transition from distinguishable to indistinguishable particles can be viewed as the introduction of relations among the various polynomials one can form from the chosen reference basis, which plays the role of an alphabet. Imposing that the many–particle space admits an ordered basis means precisely that, after applying the allowed relations, any monomial that does not respect the fixed order can be rewritten as a linear combination of ordered monomials.

The grading by particle number plays an essential role. In degree zero there can be no relations, trivially. In degree one, any relation would collapse information about a single particle, which we do not allow. Thus the minimal degree in which nontrivial relations may appear is the two–particle sector, and they must appear there, since already at two particles one can consider interferometric or scattering scenarios.

Suppose now one were to introduce relations among monomials of degree three, that is, relations allowing any cubic monomial to be rewritten in terms of ordered cubic monomials. Such relations must be consequences of the quadratic ones; otherwise, the degree–three relations could generate unintended dependencies among the ordered quadratic monomials. This would contradict our assumption that the ordered monomials of degree two form a basis. For the statistics to be well defined, every relation involving three or more indistinguishable particles must therefore decompose into relations among pairs of particles. In other words, indistinguishability at the level of $N$ particles must factor through indistinguishability of particle pairs.

Applied to the tensor algebra \(T(H)\) with the generators \(X_{i\alpha}\), Lemma~\ref{lem:quadratic} implies that admissible statistics in our sense are always realised by quadratic quotients. Hence there exists a subspace \(R\subset H\otimes H\) such that
\[
I \;=\; \langle R\rangle,
\]
the ideal generated by \(R\), and \(F\cong T(H)/\langle R\rangle\) is a quadratic algebra.

\subsection{\texorpdfstring{\(\mathrm{U}(1)\)}{U(1)} grading}

The tensor algebra carries a natural \(\mathrm{U}(1)\) action generated by the counting of particles: a phase \(z\in\mathrm{U}(1)\) acts on \(H\) by \(X_{i\alpha}\mapsto z X_{i\alpha}\) and extends multiplicatively to \(T(H)\). We will say that a homogeneous ideal \(I\) is \(\mathrm{U}(1)\)--graded if it is invariant under this action. In that case the quotient \(F\) is also \(\mathrm{U}(1)\)--graded, and the degree \(n\) component transforms with weight \(z^{n}\).

This grading expresses the compatibility of the statistics with local phase transformations that act diagonally in particle number. It is automatically satisfied for homogeneous quadratic ideals, so we take it as part of the basic setup rather than a separate constraint. The important thing is that the assumption of an ordered basis implies that the quotient space carries a canonical decomposition.
\begin{lemma}\label{lem:factorization}
Let $H$ be a complex vector space with a distinguished decomposition
\[
H \;=\; \bigoplus_{i=1}^{d} H_i,
\]
and let $T(H)$ be its tensor algebra. Let $I \triangleleft T(H)$ be a homogeneous two–sided ideal, and set
\[
F \;:=\; T(H)/I.
\]
Assume that the classes of totally ordered monomials with respect to some order in the $X_{i\alpha}$ form a basis of $F$.

Then there exists a vector space $W$ and a linear isomorphism
\[
F \;\cong\; W^{\otimes d}.
\]
Moreover, if $T(H)$ and $I$ are $\mathrm{U}(1)$–graded (by total tensor degree), this is an isomorphism of $\mathrm{U}(1)$–representations, and $W$ is a (finite or countable) reducible $\mathrm{U}(1)$–representation.
\end{lemma}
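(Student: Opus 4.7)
The plan is to realize $F$ as an iterated tensor product by bundling the ordered basis into ``per-mode'' blocks. First, I fix a total order on the generators $\{X_{i\alpha}\}$ that is lexicographic with the external index $i$ as the primary key. With this choice, every ordered monomial of $F$ automatically takes the block form $\mu_1 \mu_2 \cdots \mu_d$, where each $\mu_i$ is an ordered monomial whose letters all lie in $H_i$ (with $\mu_i = 1$ if no letter of type $i$ appears). The key combinatorial point is that the block factorization of a given ordered monomial is unique: the external index of each letter is determined by its position in the word, so two ordered monomials that agree as elements of $F$ must agree block by block.

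Next, I define, for each $i$, the subspace $W_i \subset F$ spanned by the classes of ordered monomials in the alphabet $\{X_{i\alpha}\}_\alpha$. By the ordered-basis hypothesis, those classes are linearly independent in $F$ and thus form a basis of $W_i$. I then consider the multiplication map
\[
\mu\colon W_1 \otimes W_2 \otimes \cdots \otimes W_d \longrightarrow F,
\qquad
w_1 \otimes \cdots \otimes w_d \longmapsto w_1 w_2 \cdots w_d,
\]
sending a tensor of block-basis elements to their concatenation. By the uniqueness observation, $\mu$ carries a basis to a basis bijectively, hence is a linear isomorphism. At this stage one already has $F \cong W_1 \otimes \cdots \otimes W_d$.

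To collapse the factors into a single space $W$, I would invoke the decomposition $H = \mathbb{C}^d \otimes K$ used throughout the paper: each $H_i = e_i \otimes K$ is canonically isomorphic to $K$, and the mode-local fragments of the relations translate isomorphically between blocks. Setting $W$ equal to the common single-mode Fock space (the quotient of $T(K)$ by the mode-local part of the ideal), the canonical identifications $W_i \cong W$ compose with $\mu$ to give the desired isomorphism $F \cong W^{\otimes d}$. For the $\mathrm{U}(1)$-statement, the action $X_{i\alpha} \mapsto z X_{i\alpha}$ multiplies each word by $z^{\text{length}}$; since concatenation is additive in length, $\mu$ is $\mathrm{U}(1)$-equivariant, and $W$ inherits the length grading, which exhibits it as a direct sum of one-dimensional weight spaces and hence as a (possibly countable) reducible $\mathrm{U}(1)$-representation.

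The main obstacle is really localised in the first paragraph: the uniqueness of the block factorization is what makes the multiplication map a bijection on bases, and this in turn rests squarely on the ordered-basis hypothesis. The one conceptual subtlety is the passage from $W_1 \otimes \cdots \otimes W_d$ to $W^{\otimes d}$, which uses the shared internal structure of the $H_i$; without that common structure the cleanest statement is the unsymmetric one, and the single-$W$ form is available precisely because the paper's setup treats the external modes on an equal footing.
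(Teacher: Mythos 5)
Your proposal is correct and follows essentially the same route as the paper's proof: per-mode subspaces spanned by ordered single-mode monomials, the concatenation/multiplication map carrying a basis to a basis and hence a linear isomorphism $F\cong W_1\otimes\cdots\otimes W_d$, identification of the factors via the shared internal structure, and $\mathrm{U}(1)$-equivariance from additivity of word length. The only cosmetic difference is that you justify the mutual isomorphism $W_i\cong W$ explicitly through the $\mathbb{C}^d\otimes K$ decomposition, where the paper simply asserts that the mode degeneracies are identical.
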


\subsection{\texorpdfstring{\(\mathrm{U}(d)\)}{U(d)} invariance}

The first non--trivial constraint comes from covariance under external mode transformations. For this to hold for every state, the whole space of states should carry a representation of the unitary group on the external modes $U(E)$. For this to be true the ideal itself and especially the generating subspace should have a a representation of this group as well. The external Hilbert space \(E\cong \mathbb{C}^{d}\) carries the defining representation of \(\mathrm{U}(d)\), and \(H\cong E\otimes K\) inherits an action in which \(\mathrm{U}(d)\) acts on \(E\) and trivially on \(K\). The tensor square decomposes as
\[
H\otimes H \;\cong\; (E\otimes E)\otimes (K\otimes K),
\]
and the \(\mathrm{U}(d)\) action is carried entirely by the factor \(E\otimes E\). This factor decomposes as a direct sum of symmetric and antisymmetric irreducible submodules,
\[
E\otimes E \;\cong\; \mathrm{Sym}^{2}(E)\,\oplus\, \wedge^{2} E.
\]
Thus we have
\[
H\otimes H \;\cong\; \mathrm{Sym}^{2}(E)\otimes (K\otimes K) \;\oplus\; \wedge^{2} E\otimes (K\otimes K).
\]

A \(\mathrm{U}(d)\)--invariant quadratic subspace \(R_{\mathrm{g}}\subset H\otimes H\) must respect this decomposition. By this reasoning we conclude:

\begin{lemma}\label{lem:Udinv}
Let \(H\cong \mathbb{C}^{d}\otimes K\) with the defining action of \(\mathrm{U}(d)\) on \(\mathbb{C}^{d}\). Then any \(\mathrm{U}(d)\)--invariant subspace \(R_{\mathrm{g}}\subset H\otimes H\) is of the form
\[
R_{\mathrm{g}} \;=\; \mathrm{Sym}^{2}(\mathbb{C}^{d})\otimes W_{\mathrm{sym}}\;\oplus\; \wedge^{2}(\mathbb{C}^{d})\otimes W_{\mathrm{ext}},
\]
for some subspaces \(W_{\mathrm{sym}},W_{\mathrm{ext}}\subseteq K\otimes K\).
\end{lemma}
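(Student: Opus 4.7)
The plan is to regard $R_{\mathrm{gen}}\subset H\otimes H$ as a $\mathrm{U}(d)$-submodule of an explicit isotypic decomposition, and then invoke the standard fact that an invariant subspace of $V\otimes M$, with $V$ an irreducible $G$-module and $G$ acting trivially on $M$, must be of the form $V\otimes N$ for some $N\subseteq M$.

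\noindent\textbf{Step 1: rewrite the tensor square.} First I would use the natural swap isomorphism
\[
H\otimes H \;=\;(E\otimes K)\otimes(E\otimes K)\;\cong\;(E\otimes E)\otimes(K\otimes K),
\]
which is $\mathrm{U}(d)$-equivariant when $\mathrm{U}(d)$ acts on $E$ by the defining representation and trivially on $K$. Under this identification the $\mathrm{U}(d)$ action is concentrated entirely in the first factor.

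\noindent\textbf{Step 2: isotypic decomposition.} For $d\ge 2$ the module $E\otimes E$ splits as a direct sum of two \emph{non-isomorphic} irreducible $\mathrm{U}(d)$-modules, $\mathrm{Sym}^2(E)$ and $\wedge^2(E)$ (for $d=1$ the antisymmetric piece is zero and the statement becomes vacuous in that sector). Consequently,
\[
H\otimes H\;\cong\;\bigl(\mathrm{Sym}^2 E\otimes(K\otimes K)\bigr)\oplus\bigl(\wedge^2 E\otimes(K\otimes K)\bigr)
\]
is the $\mathrm{U}(d)$-isotypic decomposition of $H\otimes H$, with multiplicity spaces equal to $K\otimes K$ in each case.

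\noindent\textbf{Step 3: invariant subspaces of an isotypic block.} Since any $\mathrm{U}(d)$-invariant subspace respects the isotypic decomposition, it suffices to show that every invariant subspace of a block of the form $V\otimes M$, where $V$ is an irreducible $\mathrm{U}(d)$-module and $\mathrm{U}(d)$ acts trivially on $M$, has the shape $V\otimes N$ for some linear subspace $N\subseteq M$. This is the standard consequence of Schur's lemma: invariant subspaces correspond bijectively to subspaces of $\mathrm{Hom}_{\mathrm{U}(d)}(V,V\otimes M)\cong M$, and the corresponding submodule is $V\otimes N$. Applying this to each block with $V=\mathrm{Sym}^2 E$ and $V=\wedge^2 E$ respectively yields
\[
R_{\mathrm{gen}}\;=\;\mathrm{Sym}^2(\mathbb{C}^d)\otimes W_{\mathrm{sym}}\;\oplus\;\wedge^2(\mathbb{C}^d)\otimes W_{\mathrm{ext}},
\]
as claimed.

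\noindent\textbf{Main obstacle.} None of the steps is genuinely difficult, but the one that carries the real content is Step~3: the assertion that invariant subspaces of $V\otimes M$ factor as $V\otimes N$. This relies crucially on the irreducibility of $\mathrm{Sym}^2 E$ and $\wedge^2 E$ and on their non-isomorphism as $\mathrm{U}(d)$-modules, so that the two isotypic blocks cannot mix under an invariant subspace. Everything else is bookkeeping.
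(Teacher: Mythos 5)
Your proof is correct and follows essentially the same route as the paper's: decompose $H\otimes H$ into the two $\mathrm{U}(d)$-isotypic blocks $\mathrm{Sym}^2(E)\otimes(K\otimes K)$ and $\wedge^2(E)\otimes(K\otimes K)$ and apply Schur's lemma to conclude that any invariant subspace is a sum of $V\otimes N$ pieces. Your explicit remarks that the two irreducibles are non-isomorphic (so the blocks cannot mix) and that the $d=1$ case degenerates are welcome precisions that the paper leaves implicit, but the argument is the same.
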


Thus the freedom in choosing \(\mathrm{U}(d)\)--invariant quadratic relations in this construction lies entirely in the choice of the internal subspaces \(W_{\mathrm{sym}}\) and \(W_{\mathrm{ext}}\). These encode how internal degeneracies are coupled to symmetric (bosonic) and antisymmetric (fermionic) external behaviour. The standard bosonic and fermionic cases correspond naturally to the cases where either $W_{\mathrm{sym}}$ or $W_{\mathrm{ext}}$ are isomorphic to $\mathbb{C}$ and the other is the zero subspace.

\subsection{Quadratic relations and the generating ideal}
In this way we can introduce a family of projectors that map into the generating space that determines the ideal. Let \(P^{(d)}_{\mathrm{sym}}\) and \(P^{(d)}_{\mathrm{ext}}\) denote the projectors on \(\mathrm{Sym}^{2}(\mathbb{C}^{d})\) and \(\wedge^{2}(\mathbb{C}^{d})\), and let
\[
P^{(K)}_{\mathrm{sym}}\colon K\otimes K\to W_{\mathrm{sym}},\qquad
P^{(K)}_{\mathrm{ext}}\colon K\otimes K\to W_{\mathrm{ext}}
\]
be projectors onto the chosen internal subspaces, with complementary projectors
\[
P^{(K)\perp}_{\mathrm{sym}} = \mathrm{id}-P^{(K)}_{\mathrm{sym}},\qquad
P^{(K)\perp}_{\mathrm{ext}} = \mathrm{id}-P^{(K)}_{\mathrm{ext}}.
\]
We define
\begin{align*}
P_{\mathrm{g}} &= P^{(d)}_{\mathrm{sym}}\otimes P^{(K)}_{\mathrm{sym} }
                  + P^{(d)}_{\mathrm{ext}}\otimes P^{(K)}_{\mathrm{ext}},\\
P_{\mathrm{g}}^{\perp} &= P^{(d)}_{\mathrm{sym}}\otimes P^{(K)\perp}_{\mathrm{sym} }
                  + P^{(d)}_{\mathrm{ext}}\otimes P^{(K)\perp}_{\mathrm{ext} }.
\end{align*}
These projectors realise the decomposition
\[
H\otimes H \;\cong\; R_{\mathrm{g}} \;\oplus\; R_{\mathrm{g}}^{\perp},
\]
where \(R_{\mathrm{g}} = \mathrm{im}\, P_{\mathrm{g}}\) is the generating quadratic subspace and \(R_{\mathrm{g}}^{\perp}\) is a chosen complement that will be relevant later to the construction of the annihilation operators.

Writing \(X_{i\alpha} = e_{i}\otimes f_{\alpha}\), any element of \(H\otimes H\) can be expressed as
\[
\sum_{i,j,\alpha,\beta} c^{\alpha\beta}_{ij} X_{i\alpha}\otimes X_{j\beta}.
\]
For each pair of external indices \(i,j\) and each elementary internal tensor \(k_{\alpha}\otimes k_{\beta}\in K\otimes K\) we define
\begin{equation}
r^{\alpha\beta}_{ij} := P_{\mathrm{g}}\bigl((e_{i}\otimes e_{j})\otimes (k_{\alpha}\otimes k_{\beta})\bigr).
\end{equation}
An explicit computation shows that
\begin{align}\label{eq:quadratic-relations}
r^{\alpha\beta}_{ij}
 &= \frac{1}{2}\sum_{\gamma,\delta} (K_{\mathrm{sym}})^{\gamma\delta}_{\alpha\beta}
  \bigl(X_{i\gamma}\otimes X_{j\delta} + X_{j\delta}\otimes X_{i\gamma}\bigr)
 \\&+ \frac{1}{2}\sum_{\gamma,\delta} (K_{\mathrm{ext}})^{\gamma\delta}_{\alpha\beta}
  \bigl(X_{i\gamma}\otimes X_{j\delta} - X_{j\delta}\otimes X_{i\gamma}\bigr),
\end{align}
where the tensors \(K_{\mathrm{sym}}\) and \(K_{\mathrm{ext}}\) encode the internal projectors. What we have now is that the generating space identifies precisely those states that will vanish after the quotient. In that sense, these vectors we have just described are the ones that determine the generating relations of the algebra
\[
r^{\alpha\beta}_{ij} = 0, \qquad 1\leq i,j\leq d,\;\; k_{\alpha}\otimes k_{\beta}\in K\otimes K.
\]
We can then denote the corresponding quotient algebra by
\[
A := T(H)/\langle R_{\mathrm{g}}\rangle.
\]

\subsection{Yang--Baxter constraints and PBW property}
All of this construction has been made over the assumption of the existence of an ordered basis after the quotient is performed. However, for the given projections and in consequence the generating relations just introduced to be compatible with this assumption we have to impose what we call the Poincar\'e--Birkhoff--Witt condition (PBW). In the language of quadratic algebras this amounts to requiring that the reduction of words using the quadratic relations is confluent\footnote{A given word in a quadratically generated algebra can be reduced in different ways. The rewriting system given by the defining relations is confluent if no matter which sequence of allowed reductions you follow, you will end up with the same final form}. A convenient formulation uses projectors rather than explicit relations.

Let \(P_{\mathrm{g}}\colon H\otimes H\to R_{\mathrm{g}}\) be the projector onto the generating subspace. Then the following characterisation is standard in the theory of quadratic algebras.

\begin{theorem}\label{thm:YB}
Let \(A = T(H)/\langle R_{\mathrm{g}}\rangle\) be a quadratic algebra defined by a projector \(P_{\mathrm{g}}\) as above. Then \(A\) admits a basis of ordered monomials if and only if, for all \(u,v,w\in H\),
\begin{equation}\label{eq:YB-global}
   (P_{\mathrm{g}}\otimes \mathrm{id})(\mathrm{id}\otimes P_{\mathrm{g}})(P_{\mathrm{g}}\otimes \mathrm{id})
 = (\mathrm{id}\otimes P_{\mathrm{g}})(P_{\mathrm{g}}\otimes \mathrm{id})(\mathrm{id}\otimes P_{\mathrm{g}})
\end{equation}
as endomorphisms of \(H^{\otimes 3}\).
\end{theorem}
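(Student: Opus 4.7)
My approach is to interpret the quadratic algebra $A$ through a rewriting system on the tensor algebra and then apply Bergman's diamond lemma, reducing both implications to a single critical-pair analysis in degree three. First I would set up the complement. Write $Q := \mathrm{id}-P_{\mathrm{gen}}$ so that $H\otimes H = R_{\mathrm{gen}}\oplus\ker P_{\mathrm{gen}}$, and identify $\ker P_{\mathrm{gen}}$ with the span of ordered two-letter words under the chosen ordering. Since $R_{\mathrm{gen}}$ vanishes in $A$, each two-letter subword $x\otimes y$ can be rewritten in $A$ as $Q(x\otimes y)$, a linear combination of ordered quadratics. This gives a local reduction rule that can be applied at any pair of adjacent positions inside a word.

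The core of the argument lies in degree three. Apply the rule either at the leftmost pair (via $Q\otimes\mathrm{id}$) or at the rightmost pair (via $\mathrm{id}\otimes Q$); iterating until no further reduction is possible produces two candidate normal forms for a generic tensor $u\otimes v\otimes w$. The gap between these two strategies is exactly the ``S-polynomial'' in Bergman's sense, and tracing the compositions shows that the discrepancy is measured by the difference
\[
(P_{\mathrm{gen}}\otimes\mathrm{id})(\mathrm{id}\otimes P_{\mathrm{gen}})(P_{\mathrm{gen}}\otimes\mathrm{id}) \;-\;(\mathrm{id}\otimes P_{\mathrm{gen}})(P_{\mathrm{gen}}\otimes\mathrm{id})(\mathrm{id}\otimes P_{\mathrm{gen}}),
\]
acting on $H^{\otimes 3}$. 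Hence confluence of the rewriting in degree three is equivalent to equation~\eqref{eq:YB-global}.

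For the direction ($\Rightarrow$), if the ordered monomials are a basis of $A$, then every tensor in $H^{\otimes 3}$ has a unique expansion in ordered three-letter monomials modulo $I_3 = R_{\mathrm{gen}}\otimes H + H\otimes R_{\mathrm{gen}}$. The two reductions land in the same coset, so they must agree as linear endomorphisms of $H^{\otimes 3}$, yielding~\eqref{eq:YB-global}. For the direction ($\Leftarrow$), I would appeal to Bergman's diamond lemma: the only ambiguities in a quadratic rewriting system occur at overlaps of length three, and the YB identity resolves them all. The diamond lemma then propagates confluence to every degree, so every element of $A$ has a unique normal form expressible as a linear combination of ordered monomials. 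This simultaneously shows that the ordered monomials span and are linearly independent, proving the PBW property.

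The main obstacle is Paragraph~2, where one must show that the obstruction to confluence at degree three is captured by exactly this form of the YB equation rather than a twisted version involving $Q$, signs from the decomposition into $\mathrm{Sym}^2\otimes W_{\mathrm{sym}}$ and $\wedge^2\otimes W_{\mathrm{ext}}$, or an additional braiding. This requires bookkeeping of how the projectors $P_{\mathrm{gen}}\otimes\mathrm{id}$ and $\mathrm{id}\otimes P_{\mathrm{gen}}$ act on $(R_{\mathrm{gen}}\otimes H)\cap(H\otimes R_{\mathrm{gen}})$; once one verifies that the iterated applications needed to normalise $u\otimes v\otimes w$ alternate exactly as in the two sides of~\eqref{eq:YB-global}, the equivalence drops out by direct comparison, and the rest of the argument is a standard invocation of the diamond lemma as found in Gurevich~\cite{Gurevich1991} and Polishchuk--Positselski~\cite{PolishchukPositselski2005}.
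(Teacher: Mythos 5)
Your proposal follows essentially the same route as the paper: the paper's appendix proof is a one-line appeal to the Polishchuk--Positselski PBW theorem (degree-three independence of admissible monomials propagates to all degrees, with the degree-three condition ``encoded by'' the braid relation $\pi^{12}\pi^{23}\pi^{12}=\pi^{23}\pi^{12}\pi^{23}$ on the reduction maps), and Bergman's diamond lemma applied to the quadratic rewriting system $x\otimes y\mapsto(\mathrm{id}-P_{\mathrm{gen}})(x\otimes y)$ is the same degree-three overlap analysis in different clothing. Your identification of the obstruction with the braid relation for $P_{\mathrm{gen}}$ rather than for $Q=\mathrm{id}-P_{\mathrm{gen}}$ is fine, since for idempotents $Q_1Q_2Q_1-Q_2Q_1Q_2=-(P_1P_2P_1-P_2P_1P_2)$. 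The one place where you are looser than you acknowledge is the forward direction: ``the two reductions land in the same coset, so they must agree as linear endomorphisms'' is not a valid inference, because after only three alternating applications the outputs lie in $\ker P_1$ (respectively $\ker P_2$) but not necessarily in the full complement $\ker P_1\cap\ker P_2$ spanned by ordered monomials, so membership in a common coset of $I_3$ does not force equality; closing this requires relating the triple compositions to the fully iterated normal-form map. The paper's own proof leaves this step equally implicit, so your proposal is at the same level of rigor and takes the same approach.
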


The identity \eqref{eq:YB-global} is a quantum Yang--Baxter type equation. When we insert the explicit form of \(P_{\mathrm{g}}\) in terms of symmetric and antisymmetric external projectors and internal projectors \(P^{(K)}_{\mathrm{sym}}\) and \(P^{(K)}_{\mathrm{ext}}\), the condition decouples into two internal Yang--Baxter identities
\begin{align}
P^{(K)}_{\mathrm{sym},12}P^{(K)}_{\mathrm{sym},23}P^{(K)}_{\mathrm{sym},12}
 &= P^{(K)}_{\mathrm{sym},23}P^{(K)}_{\mathrm{sym},12}P^{(K)}_{\mathrm{sym},23},\\
P^{(K)}_{\mathrm{ext},12}P^{(K)}_{\mathrm{ext},23}P^{(K)}_{\mathrm{ext},12}
 &= P^{(K)}_{\mathrm{ext},23}P^{(K)}_{\mathrm{ext},12}P^{(K)}_{\mathrm{ext},23}
\end{align}
on \(K^{\otimes 3}\). These are the precise constraints on the internal data that ensure the PBW property and hence the existence of a well-defined ordered basis. In particular, classical results on quadratic algebras imply that the quotient algebra, i.e.\ the state space of indistinguishable particles, is a Koszul algebra. This structural feature will be used in an essential way later on.

\subsection{Single--mode partition function}

An important feature of the construction that follows from our imposition of the ordered basis is the factorization of the global Fock space into effective single--mode components. Given that our resulting quotient space can be expressed as a tensor power of $U(1)$ representations as
\[
F \;\cong\; W^{\otimes d},
\]
where \(W\) is a graded vector space we have that the Hilbert--Poincar\'e series of \(F\),
\[
H_{F}(t) \;=\; \sum_{n\geq 0} \dim F_{n}\, t^{n},
\]
then factorises as
\[
H_{F}(t) \;=\; G(t)^{d},\qquad
G(t) \;=\; \sum_{n\geq 0} g_{n} t^{n},
\]
where \(G(t)\) is the single--mode series and \(g_{n}=\dim W_{n}\).

The coefficients \(g_{n}\) are determined solely by the quadratic projector \(P_{\mathrm{g}}\) acting on \(K\otimes K\). For each \(n\geq 0\) set \(V_{n}=K^{\otimes n}\) and define operators
\[
P^{(n)}_{k} := \mathrm{id}^{\otimes (k-1)}\otimes P_{\mathrm{g}}\otimes \mathrm{id}^{\otimes (n-k-1)},
\qquad
1\leq k\leq n-1.
\]
The subspace of admissible degree--\(n\) monomials is the intersection
\[
W_{n} \;:=\; \bigcap_{k=1}^{n-1} \ker P^{(n)}_{k},
\]
and the corresponding coefficient is
\[
g_{n} \;=\; \dim W_{n}.
\]
In the language of \cite{MedinaSanchezDakic2024} we know that the Hilbert-Poincar\'e series of the algebra as defined here is in closely related to the partition function, once we exchange the variable $t$ with the Boltzmann factor $e^{\beta\, \epsilon}$. We can then finally state the main theorem:
\begin{theorem}[Field]
Let \(G(t)\in \mathbb Z_{\ge 0}[[t]]\). The following are equivalent:
\begin{enumerate}
\item There exists a homogeneous quadratic quotient
\[
A = T(H)/\langle R\rangle,\qquad H \simeq \mathbb C^d \otimes K,
\]
which is \(U(d)\)-invariant and admits a PBW basis, such that the associated single-mode Hilbert--Poincaré series is \(G(t)\).

\item \(G(t)\) is a rational function of the form
\[
G(t)=\frac{Q_-(t)}{Q_+(t)},
\]
where \(Q_\pm(t)\in \mathbb Z[t]\), \(Q_+(0)=1\), and all roots of \(Q_+\) (resp. \(Q_-\)) are real and strictly positive (resp. strictly negative).
\end{enumerate}
\end{theorem}

This shows that we can generate the partition functions presented in \cite{MedinaSanchezDakic2024} using quadratically generated algebras and more generally that there exists a first quantization picture that can be lifted to a second quantization one.
Interestingly, a classical result due to Priddy shows that algebras admitting a Poincar\'e--Birkhoff--Witt (PBW) basis are Koszul \cite{Priddy1970}. This implies the existence of a precise duality between the Hilbert--Poincar\'e series of the algebra and that of its Koszul dual. Concretely, if
\[
A=\bigoplus_{n\ge 0} A_n , \qquad A_0=\Bbb C,
\]
is a quadratic algebra of the form
\[
A = T(V)/\langle R\rangle ,
\qquad R\subset V\otimes V,
\]
which is Koszul, then its \emph{Koszul dual} $A^!$ is defined as the quadratic algebra
\[
A^! := T(V^*)/\langle R^\perp\rangle ,
\]
where $R^\perp \subset V^*\otimes V^*$ denotes the annihilator of $R$ with respect to the natural pairing
$V^*\otimes V^* \times V\otimes V \to \Bbb C$.
If
\[
H_A(t):=\sum_{n\ge 0} (\dim A_n)\, t^n
\]
denotes the Hilbert--Poincar\'e series of $A$, then the Koszul property implies that the Hilbert--Poincar\'e
series of $A$ and its Koszul dual $A^!$ satisfy the Koszul identity
\begin{equation}\label{eq:koszul-HP}
H_A(t)\, H_{A^!}(-t)=1,
\end{equation}
or equivalently
\[
H_{A^!}(t)=\frac{1}{H_A(-t)}.
\]

In the present setting, applying~\eqref{eq:koszul-HP} at the single--mode level yields
\begin{equation}\label{eq:koszul-single-mode}
G(t)\, G^!(-t)=1,
\quad\text{equivalently}\quad
G^!(t)=\frac{1}{G(-t)}.
\end{equation}

In light of Theorem~2, this means that bosons and fermions, and more generally particles whose single--mode partition function is determined solely by its poles (which we call \emph{transbosons}) and those whose partition function is determined solely by its zeros (which we call \emph{transfermions}), are related by Koszul duality. Indeed, in the transbosonic case $G(t)=1/Q_+(t)$ one finds $G^!(t)=Q_+(-t)$, while in the transfermionic case $G(t)=Q_-(t)$ the dual series is $G^!(t)=1/Q_-(-t)$. This exhibits a genuine algebraic symmetry between these two classes of statistics, implemented by the Koszul duality operation.

In a complementary direction, one has from the canonical theory of PBW algebras that the Hilbert–Poincaré series is necessarily rational. More precisely, by Corollary~6.2 \cite{PolishchukPositselski2005}, for any subset $S \subseteq [1,m]^2$, let $M_S$ denote the $m \times m$ matrix with entries
\[
(M_S)_{ij} =
\begin{cases}
1 & \text{if } (i,j)\in S,\\
0 & \text{otherwise},
\end{cases}
\]
and let $\bar S = [1,m]^2 \setminus S$ be its complement, with associated matrix $M_{\bar S}$. Then the Hilbert series of a PBW algebra $A$ takes the form
\[
h_A(z)=\frac{\det(1+zM_{\bar S})}{\det(1-zM_S)}.
\]

We conclude that for every generating function $G(t)$ arising in Theorem~2, there exists an integer $m$ and a subset $S \subseteq [1,m]^2$ such that
\[
G(t)=\frac{\det(1+tM_{\bar S})}{\det(1-tM_S)}.
\]

Thus, the admissible class of Hilbert–Poincaré series is exactly realised by ratios of characteristic determinants of $0$--$1$ matrices associated with complementary subsets of $[1,m]^2$. In particular, the analytic structure of $G(t)$ is encoded by the spectral data of the pair $(M_S, M_{\bar S})$, providing a concrete combinatorial realisation of the abstract classification given by Theorem~2.

It is worth placing this result as well in the context of classical results on total positivity. A sequence $(a_n)_{n\ge 0}$ is a Pólya frequency (PF) sequence if and only if its associated Toeplitz matrix is totally nonnegative, that is, if and only if it is totally positive. In \cite{ReinerWelker2005}, the authors ask for which Koszul algebras the Hilbert function is a PF sequence. Within our framework, Theorem~2 provides an explicit construction of a large class of such sequences: the single-mode Hilbert--Poincaré series arising from PBW $U(d)$-equivariant algebras are precisely of the form $G(t)=Q_-(t)/Q_+(t)$ with $Q_\pm$ having real roots of fixed sign, and therefore generate totally positive (PF) coefficient sequences of the rational type. In this sense, our result realises a distinguished subclass of PF Hilbert functions through a concrete algebraic and physically meaningful construction keeping in mind that these series by \cite{MedinaSanchezDakic2024} are in one-to-one correspondence with partition functions. In that sense we have stablished a complete formal classification of the link between partition functions, state spaces and as we will see in the next section, creation-annihilation algebras.
\section{Creation--annihilation algebras and transfields}\label{sec:creation_annihilation}

The preceding analysis has determined the algebraic skeleton of the many-particle space: a quadratic quotient of the tensor algebra, constrained by unitarity and the PBW condition, whose single-mode Hilbert-Poincare series characterises the statistics completely. What this construction provides, however, is only the space of states together with the action of the creation operators. A full quantum mechanical framework requires more: one needs annihilation operators, their commutation relations with the creators, and a representation of the observable algebra on the resulting Fock space.\\
The passage from the creation algebra to the full creation-annihilation algebra is not automatic. It requires specifying how an annihilation operator moves past a creation operator, and this exchange rule must be compatible with the quadratic relations already imposed on each sector separately. The data that controls this compatibility is a linear map between the mixed tensor products of the one-particle space and its dual, which we call the cross map. Its admissible forms are constrained by Yang-Baxter type identities inherited from the PBW condition, and its non-homogeneous part is fixed by the vacuum two-point function. The present section develops this construction in full and concludes by showing that the resulting algebra carries a natural representation of the general linear group, providing the transfield analogue of the standard Fock representation.

\subsection{Annihilation algebra}
Let $\{X_{i\alpha}\}$ be the chosen homogeneous basis of $H = \mathbb{C}^{d}\otimes K$,
and let $\{X^{\ast}_{i\alpha}\}$ be the dual basis of $H^{\ast}$, defined by
\[
X^{\ast}_{i\alpha}(X_{j\beta}) \;=\; \delta_{ij}\, g_{\alpha\beta},
\]
where $g_{\alpha\beta}$ is a nondegenerate Hermitian form on $K$.  This pairing
extends to a nondegenerate bilinear form between $H^{\ast}\otimes H^{\ast}$ and
$H\otimes H$.

Throughout Section IV we fix the quadratic data determined in Section III.
Namely, let
\[
R := \operatorname{im} P_g \subset H \otimes H,
\qquad
R^\perp := \ker P_g,
\]
so that the creation algebra is
\[
A := T(H)/\langle R \rangle.
\]
We define the dual quadratic subspace
\[
S := R^* = \operatorname{im} P_g^* \subset H^* \otimes H^*,
\]
and the annihilation algebra
\[
B := T(H^*)/\langle S \rangle.
\]
Finally, we fix a linear map
\[
C : H^* \otimes H \to H \otimes H^*
\]
(the \emph{cross map}) which will encode the exchange relations between
annihilation and creation operators and is required to satisfy the
compatibility conditions of Theorem 3.

The quadratic data for creation operators consist of the decomposition
\[
H\otimes H \;=\; R_{\mathrm{g}} \;\oplus\; R_{\mathrm{g}}^{\perp},
\qquad
P_{\mathrm{g}} + P_{\mathrm{g}}^{\perp} = \mathrm{id},
\]
where $R_{\mathrm{g}} = \mathrm{im}\,P_{\mathrm{g}}$ is the space of quadratic
relations and $R_{\mathrm{g}}^{\perp}$ is the space of admissible ordered
quadratic monomials.  Under the dual pairing, this orthogonal decomposition
induces a corresponding decomposition
\[
H^{\ast}\otimes H^{\ast}
\;=\;
R_{\mathrm{g}}^{\ast} \;\oplus\; (R_{\mathrm{g}}^{\perp})^{\ast},
\]
where $R_{\mathrm{g}}^{\ast}$ is the annihilator of $R_{\mathrm{g}}^{\perp}$ and is
precisely the image of the adjoint projector
\[
P_{\mathrm{g}}^{\ast} \colon H^{\ast}\otimes H^{\ast} \longrightarrow
H^{\ast}\otimes H^{\ast},
\]
defined by
\[
\langle P_{\mathrm{g}}^{\ast}(\eta),\, v\rangle 
\;=\; 
\langle \eta,\, P_{\mathrm{g}}(v)\rangle,
\qquad
\eta\in H^{\ast}\otimes H^{\ast},\; v\in H\otimes H.
\]

The annihilation relations are therefore encoded by the dual quadratic subspace
\[
R_{\mathrm{g}}^{\ast} = \mathrm{im}\,P_{\mathrm{g}}^{\ast},
\]
and the annihilation algebra is the quadratic quotient
\[
B \;:=\; T(H^{\ast})\big/ \langle R_{\mathrm{g}}^{\ast}\rangle.
\]

This construction shows that the creation--annihilation algebras are related
by orthogonality: the projector $P_{\mathrm{g}}$ selects which quadratic
combinations of creation operators vanish, while its orthogonal complement
$P_{\mathrm{g}}^{\perp}$ selects the ordered monomials that survive.  Under the
dual pairing these roles are reversed, so the annihilation relations mirror the
creation relations with arrows reversed.  In particular, the combinatorial
structure governing annihilation is determined by the orthogonality between
$R_{\mathrm{g}}$ and $R_{\mathrm{g}}^{\perp}$.

\subsection{The Creation--Annihilation Algebras}

To obtain a single associative algebra containing both creation and annihilation operators we follow a construction due to Borowiec and Marcinek. The basic ingredient is a linear map
\[
C\colon H^{\ast}\otimes H \longrightarrow H\otimes H^{\ast},
\]
which specifies the exchange rule between a single annihilation generator and a single creation generator. Extending \(C\) multiplicatively to the free algebras \(T(H^{\ast})\) and \(T(H)\) yields a map
\[
\tilde{\tau}\colon T(H^{\ast})\otimes T(H) \to T(H)\otimes T(H^{\ast}),
\]
which is required to descend to a map
\[
\tau\colon B\otimes A \to A\otimes B
\]
on the quadratic quotients. The necessary and sufficient conditions are that the quadratic ideals be stable under the cross relations.

\begin{theorem}[Borowiec--Marcinek, \cite{BM2000}]
Let \(A = T(H)/I_{R}\) and \(B = T(H^{\ast})/I_{S}\) be quadratic algebras defined by ideals generated by \(R\subset H\otimes H\) and \(S\subset H^{\ast}\otimes H^{\ast}\). A homogeneous cross map \(\tilde{\tau}\) as above descends to a well defined map \(\tau\colon B\otimes A\to A\otimes B\) if and only if
\begin{enumerate}
  \item \(I_{R}\) is a left \(\tilde{\tau}\)--ideal, that is, \(\tilde{\tau}(H^{\ast}\otimes I_{R})\subset I_{R}\otimes T(H^{\ast})\),
  \item \(I_{S}\) is a right \(\tilde{\tau}\)--ideal, that is, \(\tilde{\tau}(I_{S}\otimes H)\subset T(H)\otimes I_{S}\).
\end{enumerate}
Equivalently, these conditions can be expressed in terms of Yang--Baxter type identities involving the quadratic data and the map \(C\).
\end{theorem}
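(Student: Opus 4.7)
My plan is to translate the descent of $\tilde\tau$ into the inclusion of the kernel of the source quotient into the kernel of the target quotient, and then to establish each direction of the equivalence separately: the forward implication using the bigrading of the tensor algebras, and the backward implication using the multiplicative recursion built into $\tilde\tau$. Let $\pi_{BA}\colon T(H^{\ast})\otimes T(H)\to B\otimes A$ and $\pi_{AB}\colon T(H)\otimes T(H^{\ast})\to A\otimes B$ denote the canonical quotient projections; their kernels are
\[
K_{BA}=I_S\otimes T(H)+T(H^{\ast})\otimes I_R,\qquad K_{AB}=I_R\otimes T(H^{\ast})+T(H)\otimes I_S,
\]
and $\tilde\tau$ descends to a well defined $\tau\colon B\otimes A\to A\otimes B$ exactly when $\tilde\tau(K_{BA})\subset K_{AB}$.

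For the forward direction, I would exploit that $\tilde\tau$ preserves the natural bidegree inherited from the two tensor algebras, since $C$ sends $H^{\ast}\otimes H$ to $H\otimes H^{\ast}$. For $\xi\in H^{\ast}$ and $\rho$ in the degree-$n$ component of $I_R$, the image $\tilde\tau(\xi\otimes\rho)$ lies in $H^{\otimes n}\otimes H^{\ast}$. Intersecting $K_{AB}$ with this bidegree produces $(I_R\cap H^{\otimes n})\otimes H^{\ast}$, since $I_S$ is concentrated in degrees at least two and hence contributes nothing to bidegree $(n,1)$. This forces $\tilde\tau(H^{\ast}\otimes I_R)\subset I_R\otimes T(H^{\ast})$, which is condition (1); the symmetric degree argument applied to $I_S\otimes H$ gives condition (2).

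For the backward direction, I would propagate the local conditions to the full kernels by induction on the word length of the factor opposite to the ideal. Because $\tilde\tau$ is built by iterating $C$, it satisfies a recursion of the form
\[
\tilde\tau(b_1 b_2\otimes\rho)=(\mathrm{id}\otimes m_{T(H^{\ast})})\circ(\tilde\tau\otimes\mathrm{id})\circ(\mathrm{id}\otimes\tilde\tau)(b_1\otimes b_2\otimes\rho),
\]
and analogously on the $T(H)$ side. Under condition (1), induct on the length of $b$ in $\tilde\tau(b\otimes\rho)$ with $\rho\in I_R$: the inductive hypothesis applied to $\tilde\tau(b_2\otimes\rho)$ produces an element of $I_R\otimes T(H^{\ast})$; a second application of (1) to a single letter $b_1\in H^{\ast}$ against the resulting $I_R$ component again lands in $I_R\otimes T(H^{\ast})$, and multiplying the trailing $T(H^{\ast})$ pieces closes the induction, yielding $\tilde\tau(T(H^{\ast})\otimes I_R)\subset I_R\otimes T(H^{\ast})$. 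The symmetric recursion combined with condition (2) gives $\tilde\tau(I_S\otimes T(H))\subset T(H)\otimes I_S$. Summing these inclusions yields $\tilde\tau(K_{BA})\subset K_{AB}$, hence the descent of $\tau$.

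The main obstacle, and the reason the statement invokes Yang-Baxter type identities, lies in converting the local conditions on generators into explicit relations between $C$ and the quadratic projectors. Restricted to $H^{\ast}\otimes R\subset H^{\ast}\otimes H^{\otimes 2}$, condition (1) asserts that the two successive applications of $C$ needed to move a single $H^{\ast}$ generator past an element of $R$ produce a combination that, after projection by $P_{\mathrm{gen}}\otimes\mathrm{id}$ on the $H^{\otimes 2}$ factor, remains in $R\otimes H^{\ast}$; this is a hexagonal diagram in $H^{\ast}\otimes H\otimes H$. Dually, condition (2) restricted to $S\otimes H$ gives the analogous hexagon in $(H^{\ast})^{\otimes 2}\otimes H$ involving $P_{\mathrm{gen}}^{\ast}$. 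Verifying these diagrammatic identities on the quadratic data is the substantive step; once they are in place, the inductive argument above lifts them to the full equivalence stated in the theorem, and exhibits the criterion as a braiding compatibility condition of Yang-Baxter and reflection-equation type.
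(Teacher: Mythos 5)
The paper does not prove this statement: it is imported verbatim from Borowiec--Marcinek \cite{BM2000} and used as a black box, so there is no in-paper argument to compare against. Your proof is correct and is essentially the standard one for such descent criteria. The reduction to the kernel inclusion $\tilde\tau(K_{BA})\subset K_{AB}$ with $K_{BA}=I_S\otimes T(H)+T(H^{\ast})\otimes I_R$ is the right starting point; the bidegree argument for necessity is sound (the key observation that $I_S$ has no component in degree one, so the bidegree-$(n,1)$ slice of $K_{AB}$ is exactly $(I_R)_n\otimes H^{\ast}$, is exactly what makes the two conditions separate cleanly); and the induction on word length for sufficiency works because the stated conditions involve the \emph{full} ideals $I_R$, $I_S$ rather than only the generating subspaces $R$, $S$, so a single application of condition (1) to one $H^{\ast}$ letter against an arbitrary element of $I_R$ is available at each inductive step. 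The one place that deserves an explicit extra line is your final paragraph: to pass from the full-ideal conditions to genuinely quadratic (Yang--Baxter/hexagon) identities on $R$, $S$ and $C$, you must also show that $\tilde\tau(H^{\ast}\otimes R)\subset I_R\otimes T(H^{\ast})$ already implies $\tilde\tau(H^{\ast}\otimes I_R)\subset I_R\otimes T(H^{\ast})$; this follows from multiplicativity of $\tilde\tau$ in the $T(H)$ argument together with the fact that $I_R$ is a two-sided ideal (moving a degree-one covector past $u\rho v$ factors through moving it past $u$, then $\rho$, then $v$, and $T(H)\cdot I_R\cdot T(H)=I_R$), but it is a separate step from the equivalence you prove and should be stated if the ``equivalently'' clause of the theorem is to be justified.
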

Here \(R \subset H\otimes H\) and \(S \subset H^*\otimes H^*\) denote the
quadratic relation subspaces defining the creation and annihilation algebras,
while \(C : H^*\otimes H \to H\otimes H^*\) is the cross map introduced above.

Under these hypotheses one can form an algebra structurally isomorphic, as a vector space, to \(A\otimes B\), in which the mixed relations are encoded by \(C\). This algebra contains both the pure creation and pure annihilation algebras and is associative by construction.

\subsection{Vacuum two--point function and non--homogeneous term}
Up to this point the construction determines only the homogeneous, quadratic part of the mixed relations. But this does not yet take into account the behaviour of the vacuum.

In a Fock representation the vacuum is characterised by the two--point function
\[
\langle 0 | X_{i\alpha} X^{\dagger}_{j\beta} |0\rangle.
\]
We assume that different external modes are orthogonal and that the internal degrees of freedom are paired by the Hermitian form \(g\) as we introduced in the previous section. Thus we set
\[
\langle 0 | X_{i\alpha} X^{\dagger}_{j\beta} |0\rangle \;=\; g_{\alpha\beta}\,\delta_{ij}.
\]
This defines a linear functional
\[
G\colon H^{\ast}\otimes H\to \mathbb{C}, \qquad
G(X^{\ast}_{i\alpha}\otimes X_{j\beta}) = g_{\alpha\beta}\,\delta_{ij}.
\]
Borowiec and Marcinek observed that, because \(G\) takes values in the centre of the algebra, adding \(G\) to the homogeneous cross
\[
\tau = \tilde{\tau} + G
\]
does not affect the quadratic consistency conditions. The resulting mixed relations can be written compactly using a suitable bracket that depends on the internal projectors.

\subsection{The \texorpdfstring{\((A,B)\)}{(A,B)} bracket}

Let \(A\) and \(B\) denote operators determined by the internal projectors \(P^{(K)}_{\mathrm{sym}}\) and \(P^{(K)}_{\mathrm{ext}}\). For any pair of operators \(P_{i\alpha}\) and \(Q_{j\beta}\) we define
\begin{align}
\label{eq:ABbracket}
[ P_{i\alpha}, Q_{j\beta} ]_{A,B}
:= 
&\sum_{\eta,\lambda}
A^{\eta\lambda}{}_{\alpha\beta}
\bigl[ P_{i\eta}, Q_{j\lambda} \bigr]_{+}
\;+\\
&\sum_{\eta,\lambda}
B^{\eta\lambda}{}_{\alpha\beta}
\bigl[ P_{i\eta}, Q_{j\lambda} \bigr]_{-},
\end{align}
where $[P_{i\alpha}, Q_{j\beta}]_{\pm}$ represent the commutator ($-$)
and anticommutator ($+$). When \(P\) and \(Q\) are both creation operators, this bracket encodes the quadratic relations among them; similarly for annihilation operators. The mixed relations can then be written as
\begin{align}
[X_{i\alpha}^\dagger, X^{\dagger}_{j\beta}]_{A,B} &= 0\\
[X_{i\alpha}, X_{j\beta}]_{A,B} &= 0\\
[X_{i\alpha}, X^{\dagger}_{j\beta}]_{A,B} &= g_{\alpha\beta}\,\delta_{ij} \,\mathbf{1},
\end{align}
which generalise the standard CCR and CAR. The values of \(A\) and \(B\) are fixed by the Yang--Baxter compatibility conditions relating the creation, annihilation, and cross exchanges.

A useful special case is obtained when the tensors \(A\) and \(B\) arise from a single operator
\[
R:K\otimes K\to K\otimes K
\]
by
\[
A=\frac{1-R}{2},\qquad B=\frac{1+R}{2}.
\]
Then \(A+B=\mathrm{id}\) and \(A-B=-R\), so the \((A,B)\)-bracket takes the form
\[
[P_{i\alpha},Q_{j\beta}]_{A,B}
=
P_{i\alpha}Q_{j\beta}
-
\sum_{\eta,\lambda}R^{\eta\lambda}_{\alpha\beta}\,Q_{j\lambda}P_{i\eta}.
\]
Accordingly, the defining relations become
\[
X^\dagger_{i\alpha}X^\dagger_{j\beta}
=
\sum_{\eta,\lambda}R^{\eta\lambda}_{\alpha\beta}\,
X^\dagger_{j\lambda}X^\dagger_{i\eta},
\qquad
X_{i\alpha}X_{j\beta}
=
\sum_{\eta,\lambda}R^{\eta\lambda}_{\alpha\beta}\,
X_{j\lambda}X_{i\eta},
\]
and
\[
X_{i\alpha}X^\dagger_{j\beta}
=
\sum_{\eta,\lambda}R^{\eta\lambda}_{\alpha\beta}\,
X^\dagger_{j\lambda}X_{i\eta}
+
g_{\alpha\beta}\delta_{ij}\mathbf 1.
\]
After a relabelling of internal indices, this is precisely the form of the R-paraparticle exchange relations \cite{WangHazzardProvided}. That framework is therefore recovered as a special case of the present bracket formalism, namely when the quadratic data are encoded by a single involutive Yang--Baxter operator \(R\).

\subsection{Transfield representation of \texorpdfstring{\(\mathfrak{gl}(d)\)}{gl(d)}}

Using the above relations we can now construct a representation of the general
linear algebra in terms of the creation and annihilation operators defining our
transtatistical fields (namely \emph{Transfields}). We introduce a family of operators \(J_{ij}\) defined using the internal Hermitian
form \(g\). Let \(g_{\alpha\beta}\) be a nondegenerate Hermitian form on the internal
space \(K\), and let \(g^{\beta\alpha}\) denote the entries of its inverse, so that
\(\sum_{\beta} g_{\alpha\beta} g^{\beta\gamma} = \delta^\gamma_{\alpha}\). We define
\[
J_{ij} := \sum_{\alpha,\beta} g^{\beta\alpha}\, X^{\dagger}_{i\alpha} X_{j\beta},
\]
where \(i,j = 1,\dots,d\) label the external modes and \(\alpha,\beta\) the internal
degrees of freedom. We say that the Hermitian form $g$ on $K$ is compatible with the quadratic
data $(R,S,C)$ if the projector $P_g$ is self-adjoint with respect to the
induced inner product on $H \otimes H$, and the cross map $C$ is adjoint to
the quadratic data under the pairing induced by $g$.
Under these adjointness conditions, we may state the following lemma.

\begin{lemma}
\label{lem:gl_d_action}
The following identities hold:
\begin{enumerate}
  \item[(i)] (Action on creators)
  \[
    [J_{ij}, X_{k\sigma}^\dagger]
    = \delta_{jk}\,X_{i\sigma}^\dagger,
    \qquad
    1 \leq i,j,k \leq d.
  \]
  \item[(ii)] (Action on annihilators)
  \[
    [J_{ij}, X_{k\sigma}]
    = -\,\delta_{ik}\,X_{j\sigma},
    \qquad
    1 \leq i,j,k \leq d.
  \]
  \item[(iii)] (Commutators among the $J$)
  \[
    [J_{ij}, J_{kl}]
    = \delta_{jk}\,J_{il}
      - \delta_{il}\,J_{kj},
    \qquad
    1 \leq i,j,k,l \leq d.
  \]
\end{enumerate}
In particular, the operators $J_{ij}$ satisfy the commutation relations of the
Lie algebra $\mathfrak{gl}(d)$.
\end{lemma}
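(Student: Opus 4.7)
The plan is to prove (i), (ii), (iii) in order, using two tools throughout: the Leibniz identity $[PQ,R]=P[Q,R]+[P,R]Q$ for ordinary commutators, and the observation that the contraction $\sum_{\alpha,\beta} g^{\beta\alpha}$ implicit in $J_{ij}$ is precisely the operation under which the adjointness hypotheses on $C$ (relative to $R$ and $S$) become effective. The algebraic content of the model is packaged in the $(A,B)$-brackets $[X^\dagger,X^\dagger]_{A,B}=0$, $[X,X]_{A,B}=0$, and $[X,X^\dagger]_{A,B}=g_{\alpha\beta}\delta_{ij}\mathbf{1}$; the proof consists in disassembling these into ordinary (anti)commutators and showing that after the $g$-trace only the scalar vacuum term contributes.

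For (i), I would begin with
\[
[J_{ij}, X^\dagger_{k\sigma}] = \sum_{\alpha,\beta} g^{\beta\alpha}\Bigl(X^\dagger_{i\alpha}\,[X_{j\beta}, X^\dagger_{k\sigma}] + [X^\dagger_{i\alpha}, X^\dagger_{k\sigma}]\,X_{j\beta}\Bigr).
\]
The creator-creator ordinary commutator is controlled by $[X^\dagger,X^\dagger]_{A,B}=0$, which, once separated into its symmetric ($A$) and antisymmetric ($B$) parts, expresses $[X^\dagger_{i\alpha},X^\dagger_{k\sigma}]$ as an internal linear combination of products of creators weighted by a tensor built from $A$ and $B$. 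The mixed commutator $[X_{j\beta},X^\dagger_{k\sigma}]$ is governed by the cross map $\tau=\widetilde{\tau}+G$, which writes it as a quadratic normal-ordered piece (coefficients determined by $C$) plus the scalar vacuum contribution $g_{\beta\sigma}\delta_{jk}\mathbf{1}$. The decisive step, which I would isolate as a small technical lemma, is that after contraction with $g^{\beta\alpha}$ the two quadratic contributions cancel exactly; this is where the adjointness of $C$ to $R$ and $S$ with respect to the metric $g$ is used, since that hypothesis makes the coefficient tensors on either side become $g$-transposes of one another. Only the vacuum term then survives, giving $[J_{ij},X^\dagger_{k\sigma}]=\delta_{jk}\sum g^{\beta\alpha}g_{\beta\sigma}X^\dagger_{i\alpha}=\delta_{jk}X^\dagger_{i\sigma}$, which is (i). Claim (ii) follows by a parallel argument starting from $[X,X]_{A,B}=0$; the overall minus sign arises from the fact that the annihilator $X_{j\beta}$ sits to the right of the creator $X^\dagger_{i\alpha}$ in the definition of $J_{ij}$.

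For (iii), once (i) and (ii) are in hand, I expand
\[
[J_{ij}, J_{kl}]=\sum_{\gamma,\delta}g^{\delta\gamma}\Bigl(X^\dagger_{k\gamma}[J_{ij},X_{l\delta}]+[J_{ij},X^\dagger_{k\gamma}]X_{l\delta}\Bigr),
\]
substitute (i) and (ii) into the two inner brackets, and reassemble the sums over $(\gamma,\delta)$ using the defining formula for $J_{\bullet\bullet}$. The two contributions combine into $\delta_{jk}J_{il}-\delta_{il}J_{kj}$, which is the $\mathfrak{gl}(d)$ bracket. The main obstacle is therefore the cancellation lemma used in (i): it amounts to checking by direct index manipulation that the tensor coefficients extracted from $[X^\dagger,X^\dagger]_{A,B}=0$ and from $\widetilde{\tau}$ are $g$-adjoints of one another, so that their traces against $g^{\beta\alpha}$ differ by a sign and cancel. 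Once this identity is in place, the remainder is routine Leibniz bookkeeping.
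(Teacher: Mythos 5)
Your proposal is correct and follows essentially the same route as the paper's proof: your Leibniz decomposition $[J_{ij},X^\dagger_{k\sigma}]=\sum g^{\beta\alpha}\bigl(X^\dagger_{i\alpha}[X_{j\beta},X^\dagger_{k\sigma}]+[X^\dagger_{i\alpha},X^\dagger_{k\sigma}]X_{j\beta}\bigr)$ is algebraically identical to the paper's direct expansion into $T_1-T_2$, and your "cancellation lemma" is exactly the paper's component identity $\sum_\beta g^{\beta\alpha}C^{\gamma\delta}{}_{\beta\sigma}=\sum_\eta S^{\gamma\alpha}{}_{\sigma\eta}g_{\delta\eta}$ expressing that the transported $C$ is the $g$-adjoint of $S$ (resp.\ $R$ for part (ii)). Part (iii) is handled identically in both, so the only remaining work in your sketch is writing out that index identity explicitly.
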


This assumption is well motivated by the fact that the operators \(R\), \(S\), and \(C\)
encode, in a precise algebraic sense, the exchange of indistinguishable particles.
Accordingly, their action on multiparticle states should not alter transition
probabilities, that is, the two–particle correlations and, consequently, the
correlations in systems with an arbitrary number of particles.

We thus see that the operators $J_{ij}$ satisfy the commutation relations of the
general linear algebra $\mathfrak{gl}(d)$, providing a natural way to construct
representations of $\mathfrak{gl}(d)$ in terms of the transfields
$X_{i\alpha},\,X_{i\alpha}^\dagger$.

\section{An explicit finite--order example}\label{sec:example}

To illustrate the abstract formalism we briefly describe a simple example with a finite maximal occupation number for each mode.

We keep the internal space $K \cong \mathbb C^3$ with orthonormal basis
$\{k_\alpha\}_{\alpha=1}^3$ and generators
\[
X_{i\alpha} := e_i \otimes k_\alpha
\qquad (i=1,2)
\]
where $i$ labels the external modes.  
As before, we choose the unique surviving quadratic
direction in $K\otimes K$ to be
\[
h := \frac{1}{\sqrt{3}}\bigl(
k_1\otimes k_1 + k_2\otimes k_2 + k_3\otimes k_3
\bigr),
\]
and let $R_{\mathrm{g}} := h^\perp \subset K\otimes K$ generate the
quadratic relations for a single mode.  

For each fixed $i$, this implies
that all off--diagonal products $X_{i\alpha}X_{i\beta}$ with $\alpha\neq\beta$
vanish and all diagonal products $X_{i\alpha}X_{i\alpha}$ are identified in
the quotient.  
Thus the single--mode Hilbert--Poincar\'e series is
\[
G(t) = 1 + 3 t + t^2,
\]
so that for $d=2$ modes we obtain
\[
H_F(t) = G(t)^2 = 1 + 6 t + 11 t^2 + 6 t^3 + t^4.
\]
In particular, the global Fock space $F$ is supported in total degrees
$0,\dots,4$, and we have the factorization
\[
F \;\cong\; W^{\otimes 2},
\]
where $W$ is the single--mode Fock space with homogeneous components
$\dim W_0 = 1$, $\dim W_1 = 3$, $\dim W_2 = 1$, and $W_n = 0$ for $n\ge 3$.

Let $\ket{0}_i$ denote the vacuum of mode $i$ and set
$\ket{0} := \ket{0}_1\otimes \ket{0}_2$.
For each mode $i$ we choose the following homogeneous basis of $W$:
\[
W_0 = \mathrm{span}\{\ket{0}_i\},\qquad
W_1 = \mathrm{span}\{\hat X_{i\alpha}^\dagger\ket{0}_i \mid \alpha=1,2,3\},
\]
\[
W_2 = \mathrm{span}\{Y_i\}, \qquad
Y_i := \bigl(X_{i1}X_{i1} + X_{i2}X_{i2} + X_{i3}X_{i3}\bigr),
\]
so that the vector corresponding to $Y_i$ in the Fock representation is
\[
\hat Y_i^\dagger\ket{0}_i
=
\bigl(
\hat X_{i1}^\dagger \hat X_{i1}^\dagger
+
\hat X_{i2}^\dagger \hat X_{i2}^\dagger
+
\hat X_{i3}^\dagger \hat X_{i3}^\dagger
\bigr)\ket{0}_i.
\]

Using the tensor product decomposition $F \cong W_1 \otimes W_2$, the
homogeneous subspaces of the full Fock space and their bases up to total
degree $4$ are as follows.

\medskip\noindent
\begin{itemize}

\item \textbf{Degree $0$.}
\[
F_0 = \mathrm{span}\{\ket{0}\},\qquad
\ket{0} = \ket{0}_1\otimes\ket{0}_2.
\]

\medskip\noindent
\item \textbf{Degree $1$.}
Here $N = p+q$ with $(p,q)\in\{(1,0),(0,1)\}$, hence
\[
k_1 \cong (W_1\otimes W_0)\oplus(W_0\otimes W_1)
\]
and a basis is
\[
\hat X_{1\alpha}^\dagger\ket{0} := 
\hat X_{1\alpha}^\dagger\ket{0}_1\otimes\ket{0}_2,\qquad
\hat X_{2\alpha}^\dagger\ket{0} := 
\ket{0}_1\otimes \hat X_{2\alpha}^\dagger\ket{0}_2,
\]
for $\alpha=1,2,3$.  Thus $\dim k_1 = 3+3 = 6$.

\medskip\noindent
\item \textbf{Degree $2$.}
Now $N=2$ with $(p,q)\in\{(2,0),(1,1),(0,2)\}$, so
\[
k_2 \cong (W_2\otimes W_0)\;\oplus\;(W_1\otimes W_1)\;\oplus\;(W_0\otimes W_2).
\]
A convenient basis is given by:
\begin{itemize}
\item one vector of type $(2,0)$:
\[
\hat Y_1^\dagger\ket{0} := \hat Y_1^\dagger\ket{0}_1\otimes\ket{0}_2,
\]
\item nine vectors of type $(1,1)$:
\[
\hat X_{1\alpha}^\dagger \hat X_{2\beta}^\dagger\ket{0}
:= 
\hat X_{1\alpha}^\dagger\ket{0}_1\otimes 
\hat X_{2\beta}^\dagger\ket{0}_2,
\qquad \alpha,\beta=1,2,3,
\]
\item one vector of type $(0,2)$:
\[
\hat Y_2^\dagger\ket{0} := 
\ket{0}_1\otimes \hat Y_2^\dagger\ket{0}_2.
\]
\end{itemize}
Hence $\dim k_2 = 1 + 9 + 1 = 11$, in agreement with the coefficient of
$t^2$ in $H_F(t)$.

\medskip\noindent
\item \textbf{Degree $3$.}
Here $N=3$ with $(p,q)\in\{(2,1),(1,2)\}$, so
\[
k_3 \cong (W_2\otimes W_1)\;\oplus\;(W_1\otimes W_2).
\]
A basis can be chosen as
\begin{align*}
\hat Y_1^\dagger \hat X_{2\beta}^\dagger\ket{0}
&:= 
\hat Y_1^\dagger\ket{0}_1\otimes 
\hat X_{2\beta}^\dagger\ket{0}_2,
&&\beta=1,2,3,
\\[2pt]
\hat X_{1\alpha}^\dagger \hat Y_2^\dagger\ket{0}
&:= 
\hat X_{1\alpha}^\dagger\ket{0}_1\otimes 
\hat Y_2^\dagger\ket{0}_2,
&&\alpha=1,2,3.
\end{align*}
Thus $\dim k_3 = 3+3 = 6$.

\medskip\noindent
\item \textbf{Degree $4$.}
Finally $N=4$ forces $(p,q)=(2,2)$, so
\[
F_4 \cong W_2\otimes W_2
\]
is one--dimensional, with basis
\[
\hat Y_1^\dagger \hat Y_2^\dagger \ket{0}
:= 
\hat Y_1^\dagger\ket{0}_1 \otimes \hat Y_2^\dagger\ket{0}_2.
\]
\end{itemize}
With this, we can finally write the relations that generate the algebra.  
For each mode $i \in \{1,2\}$ the quadratic ideal generated by 
$R_{\mathrm{g}} = h^\perp \subset K \otimes K$ is given by
\begin{equation}\label{eq:single-mode-rel}
X_{i\alpha}X_{i\beta} = 0
\quad (\alpha \neq \beta),
\qquad
X_{i1}X_{i1} = X_{i2}X_{i2} = X_{i3}X_{i3}.
\end{equation}
Equivalently, a basis of quadratic relations for mode $i$ is
\begin{align}
X_{i\alpha}X_{i\beta} &= 0 
\quad (\alpha \neq \beta), 
\\
X_{i1}X_{i1} - X_{i2}X_{i2} &= 0,
\\
X_{i1}X_{i1} - X_{i3}X_{i3} &= 0.
\end{align}

This example exhibits a non--trivial finite--order statistics that differs both from bosons and fermions yet fits within the general quadratic quotient framework. It provides a concrete realisation of a transtatistical species.

\section{Conclusions and outlook}\label{sec:outlook}

We have developed an operationally motivated generalisation of the transition from first to second quantization, without assuming symmetrisation principles as fundamental postulates. Building on the work previously presented by the authors on the reconstruction of quantum statistics, we have provided an explicit algebraic formulation of families of transtatistics using quotients of tensor algebras over single-particle spaces. The assumption that an ordered basis of monomials survives the quotient then yields the operational locality condition, expressed as mode-by-mode independence of particle counting. Once this structural requirement is in place, we impose a U(d) symmetry on the accessible degrees of freedom, which fixes the external behaviour of the relations and supplies the representation-theoretic framework necessary for defining transtatistics.

From a physical perspective, this quotient-based construction offers a way of defining indistinguishability in operational terms, as the loss of accessible information that could be used to differentiate between particles. Algebraically, the emergence of the Yang--Baxter equation as a necessary condition for statistical consistency links directly to the operational notion of locality: once states are defined solely through particle numbers, relative to additional information that is operationally inaccessible, the resulting space is quadratically generated, with the Yang--Baxter solutions ensuring invariance of the statistics under subsystem decompositions.

On the other hand, although we have defined the quadratic relations that determine the Fock space through the creation operators, the inner product on $\mathcal{H}$ allows us to introduce the annihilation operators in the standard way, namely as the adjoints of the creation operators. In the same way, we use this notion of orthogonality to formalize the creation--annihilation algebras. Moreover, this framework connects with existing work on parastatistics, which share the guiding idea of scrambling internal degrees of freedom that are operationally inaccessible, where we find the classical examples \cite{Green53} together with more recent developments \cite{WangHazzardProvided}. In those cases the relations are introduced heuristically and by postulate, whereas here we obtain a complete characterization derived from basic operational assumptions about the quantum physics of many-particle systems.

From a broader mathematical perspective, the present construction contributes to the problem of identifying Koszul algebras whose Hilbert functions are totally positive (Pólya frequency) sequences, as formulated in \cite{ReinerWelker2005}. Since PF sequences coincide with totally positive sequences, this question can be viewed as a structural problem at the interface of commutative algebra, combinatorics, and representation theory. Recent work \cite{SamVandeBogert2024} further highlights deep connections between total positivity and algebraic constructions such as Schur-type modules and free resolutions.

The class of PBW $U(d)$-equivariant algebras introduced here provides a concrete and physically motivated family for which total positivity holds and is completely controlled by quadratic data. This suggests that the operational principles underlying indistinguishability naturally single out a rigid subclass of Koszul-type algebras in which total positivity is not merely a property, but a consequence of the underlying algebraic and symmetry structure.

A natural generalisation of this work is its application to quantum field theory. Since we have introduced new commutation relations, one can move from discrete degrees of freedom to measure spaces such as those modelling spacetime. This would allow the description of new types of fields compatible with transtatistics, while remaining consistent with the operational arguments presented here regarding accessible information and operational indistinguishability, but now in the continuum. Of particular interest are the consequences of imposing a causal structure on the accessible degrees of freedom, as this additional symmetry could further constrain, or even fully collapse, the Yang--Baxter symmetry of the projectors to the standard bosonic and fermionic cases. This can be viewed as an extension of the spin--statistics theorem within the present framework \cite{Pauli1940,StreaterWightman1964,Greenberg1991}.
\section*{Acknowledgements}
We thank Tristan Malleville and Julian Maisriml for their comments and for insightful discussions on the subject matter. This research was funded in whole, or in part, by the Austrian Science Fund (FWF) [10.55776/F71], [10.55776/P36994] and [10.55776/COE1] and the European Union – NextGenerationEU. For open access purposes, the author(s) has applied a CC BY public copyright license to any author accepted manuscript version arising from this submission.
\section*{Conflict of Interest}
The author declares that there is no conflict of interest.

\section*{Data Availability}
No data were generated or analyzed in this study.
\bibliography{apssamp}

\onecolumngrid
\appendix

\appendix

\section{Algebraic preliminaries}

\subsection{Ordered basis implies quadratic generation}
 \begin{proof}[Lemma 1]
  Given a homogeneous ideal \(I \subset T(H)\) and a fixed monomial order on \(T(H)\), assume that there exists an ordered monomial basis for the quotient. Without loss of generality, we may then take the set
\[
\mathcal{B} := \{\, x_1^{n_1}\cdots x_d^{n_d} \mid n_i \in \mathbb{N} \,\},
\]
where \(\{x_1,\ldots,x_d\}\) is a basis of \(H\), and regard its images
\[
[x_1^{n_1}\cdots x_d^{n_d}] \in T(H)/I
\]
as spanning the quotient. Any unordered monomial \(w \in T(H)\) is projected to a linear combination of ordered monomials; equivalently, the projection induces a rewriting map
\[
w \longmapsto \sum_{u \in \mathcal{B}} c_u\, u,
\]
where the coefficients \(c_u\) are determined by the defining relations of \(I\). In particular, the degree–two monomials carry a prescribed ordering rule, and the projection corrects any violation of this rule.

Consider now an arbitrary monomial \(w\) of degree \(N > 2\). One may attempt to rewrite \(w\) into ordered form by iteratively correcting the order of all degree–two subwords. Under the assumption that the rewriting system generated by the degree–two relations is confluent, this pairwise correction uniquely determines an ordered representative of \(w\).

Assume, for contradiction, that in addition to the degree–two relations there exists an independent higher–degree relation
\[
r = \sum_i \alpha_i w_i \in I,
\]
with all \(w_i\) of degree \(N > 2\), which is not a consequence of the degree–two relations. Then a monomial of degree \(N\) appearing in \(r\) admits two distinct reductions:
\[
w \overset{\text{pairwise}}{\longmapsto} u_{\mathrm{pair}}  
\qquad\text{and}\qquad
w \overset{r}{\longmapsto} u_{\mathrm{high}},
\]
where \(u_{\mathrm{pair}}\) is the ordered monomial obtained by successive degree–two corrections, and \(u_{\mathrm{high}}\) is the expression obtained by applying the higher–degree relation. If \(u_{\mathrm{pair}} \neq u_{\mathrm{high}}\), then the difference provides a nontrivial linear dependence among ordered monomials:
\[
u_{\mathrm{pair}} - u_{\mathrm{high}} = 0 \quad \text{in } T(H)/I,
\]
contradicting the assumption that the ordered monomials form a basis.

Hence no additional higher–degree relations may exist independently of those generated by degree–two relations.
 \end{proof}

\subsection{Factorisation as a tensor power}
 \begin{proof}[Lemma 2]
     Let \(\{X_{i\alpha}\}\) be a basis of \(H\). Assume a monomial order in which all monomials are grouped into blocks according to the index \(i\). That is, the quotient \(T(H)/I\) is spanned by elements of the form
\[
[X_{1\alpha_{1,1}}\cdots X_{1\alpha_{1,m_1}}
 \, X_{2\alpha_{2,1}}\cdots X_{2\alpha_{2,m_2}}
 \cdots
 X_{d\alpha_{d,1}}\cdots X_{d\alpha_{d,m_d}}].
\]
Equivalently, we may describe any ordered monomial by writing it as
\[
w_1 w_2 \cdots w_d,
\]
where each word \(w_i\) is generated by the symbols \(X_{i\alpha}\), with \(\alpha\) varying and \(i\) fixed.

Since we have assumed that the ordered monomials form a basis of the quotient, the set of all ordered monomials with fixed \(i\) spans a subspace of the quotient. Denote this subspace by
\[
A_i \subset T(H)/I.
\]

There is a natural map
\[
A_1 \otimes \cdots \otimes A_d \longrightarrow T(H)/I,
\qquad
w_1 \otimes \cdots \otimes w_d \longmapsto [w_1\cdots w_d],
\]
defined by concatenation followed by projection onto the quotient.

This map is surjective, since every ordered monomial in the quotient has the form \(w_1\cdots w_d\) as described. It is also injective: the tensors \(w_1\otimes\cdots\otimes w_d\) form a basis of \(A_1\otimes\cdots\otimes A_d\), and the map sends this basis to a basis of the quotient. Hence we obtain a vector space isomorphism
\[
T(H)/I \cong A_1\otimes\cdots\otimes A_d.
\]

Moreover, by assumption the degeneracies for each mode \(i\) are identical, so all subspaces \(A_i\) are mutually isomorphic. Let \(W\) denote their common isomorphism class. Then
\[
T(H)/I \cong W^{\otimes d}.
\]

Finally, there is a natural \(U(1)\)-action on the quotient: an ordered monomial is multiplied by a phase whose argument depends on its degree. For $g\in U(1)$ we have $g\cdot w\coloneqq e^{in\theta}w$, for all $w$ of order $n$. In the same way, we have an action of $U(1)$ on $W^{\otimes d}$ similarly by adding up the degrees of the components on $w_1\otimes...\otimes w_d$ and then multiplying by a phase accordingly. Naturally, these actions commute with respect to the isomorphism hence we have that
\[
T(H)/I \cong W^{\otimes d}
\]
as \(U(1)\)-representations.

 \end{proof}

\subsection{$U(d)$-invariant quadratic subspaces}
 \begin{proof}[Lemma 3]
We begin with the decomposition
\[
H \otimes H 
\;\cong\;
\mathrm{Sym}^2(\mathbb{C}^d)\otimes (K\otimes K)
\;\oplus\;
\wedge^2(\mathbb{C}^d)\otimes (K\otimes K).
\]
Each summand is an isotypic component of the \(U(d)\)-representation on \(H\otimes H\).

By Schur's lemma, any \(U(d)\)-equivariant operator
\[
T : H\otimes H \longrightarrow H\otimes H
\]
must act as the identity on the irreducible factor tensored with an arbitrary linear map on the multiplicity space \(K\otimes K\). In particular, if \(R \subset H\otimes H\) is a \(U(d)\)-invariant subspace, then there exists a \(U(d)\)-equivariant projection
\[
P : H\otimes H \longrightarrow H\otimes H
\]
whose image is \(R\).

With respect to the isotypic decomposition above, the projector decomposes as
\[
P = 
\bigl(\mathrm{id}_{\mathrm{Sym}^2(\mathbb{C}^d)} \otimes S_{\mathrm{sym}}\bigr)
\;\oplus\;
\bigl(\mathrm{id}_{\wedge^2(\mathbb{C}^d)} \otimes S_{\mathrm{ext}}\bigr),
\]
where
\[
S_{\mathrm{sym}},\; S_{\mathrm{ext}} : K\otimes K \longrightarrow K\otimes K
\]
are linear projections.

Define
\[
W_{\mathrm{sym}} = \operatorname{im}(S_{\mathrm{sym}}),
\qquad
W_{\mathrm{ext}} = \operatorname{im}(S_{\mathrm{ext}}).
\]
Then the image of \(P\), which is precisely the invariant subspace \(R\), has the form
\[
R
=
\mathrm{Sym}^2(\mathbb{C}^d)\otimes W_{\mathrm{sym}}
\;\oplus\;
\wedge^2(\mathbb{C}^d)\otimes W_{\mathrm{ext}},
\]
for certain subspaces \(W_{\mathrm{sym}}, W_{\mathrm{ext}} \subset K\otimes K\).

Thus every \(U(d)\)-invariant subspace of \(H\otimes H\) is obtained by choosing arbitrary multiplicity subspaces in each isotypic component.

 \end{proof}

\section{PBW property and Yang--Baxter constraints}

This is based on \cite{PolishchukPositselski2005}.
\subsection*{PBW Bases}

Let \(A = T(V)/I\) be a quadratic algebra, where \(V\) has an ordered basis
\(\{x_1,\dots,x_m\}\). For each multiindex \(\alpha=(i_1,\dots,i_n)\), write
\[
x^\alpha = x_{i_1}\cdots x_{i_n}\in T(V),
\]
and impose a lexicographic order on all multiindices.

The construction begins by identifying the degree--two monomials that survive modulo the quadratic relations. Given the order on multiindices, one considers the set
\[
S^{(2)} = \{(i,j)\mid x_i x_j \text{ cannot be written modulo } I 
\text{ as a linear combination of } x_k x_\ell \text{ with } (k,\ell)<(i,j)\}.
\]
These are the ``admissible'' pairs.

Higher--degree admissible multiindices are then defined inductively:
\[
S^{(n)} = \{(i_1,\dots,i_n) \mid (i_k,i_{k+1})\in S^{(2)} 
\text{ for all } k\}.
\]
Thus a monomial \(x_{i_1}\cdots x_{i_n}\) is admissible precisely when each of its adjacent pairs is admissible in degree two.

\begin{definition}[PBW Basis]
The algebra \(A=T(V)/I\) is said to admit a \emph{PBW basis} if the set
\[
\mathcal{B}
= \{x^\alpha \mid \alpha \in \bigcup_{n\ge 0} S^{(n)}\}
\]
projects to a vector space basis of \(A\).  
In this case, the generators \(x_1,\dots,x_m\) are called \emph{PBW generators} of \(A\).
\end{definition}

Thus a PBW basis is a monomial basis, compatible with lexicographic order, such that all nonadmissible monomials reduce uniquely to linear combinations of admissible ones. It is the noncommutative analogue of the classical Poincar\'e--Birkhoff--Witt basis.

The key structural fact is that if a pair \((i,j)\notin S^{(2)}\), then
\[
x_i x_j
= \sum_{(k,\ell)<(i,j)} c^{k,\ell}_{i,j} \, x_k x_\ell,
\]
and these quadratic reductions extend consistently to all degrees through iterated projections. A PBW basis exists precisely when these reductions are confluent.

\subsection*{PBW Theorem}

\begin{theorem}[PBW Theorem]
If the cubic monomials \(x_{i_1} x_{i_2} x_{i_3}\) with 
\((i_1,i_2,i_3)\in S^{(3)}\) are linearly independent in degree three of
\(A=T(V)/I\), then the same holds in every degree.  
Therefore, the generators \((x_1,\dots,x_m)\) are PBW generators and 
\(\mathcal{B}\) is a PBW basis of \(A\).
\end{theorem}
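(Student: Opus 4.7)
The plan is to proceed by induction on the degree $n$, treating separately the spanning statement and the linear independence of admissible monomials $\mathcal{B}^{(n)} := \{x^{\alpha}\mid \alpha\in S^{(n)}\}$ in $A_n$. Spanning is the easy half: for any monomial $x_{i_1}\cdots x_{i_n}$ containing an adjacent pair $(i_k,i_{k+1})\notin S^{(2)}$, the quadratic relations rewrite $x_{i_k}x_{i_{k+1}}$ as a linear combination of pairs $x_k x_\ell$ strictly smaller in lexicographic order. Since the lexicographic order on multiindices is well-founded, iterating such reductions on every non-admissible adjacent pair terminates in a linear combination of elements of $\mathcal{B}^{(n)}$. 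Hence $\mathcal{B}^{(n)}$ spans $A_n$ in every degree, independently of the cubic hypothesis.

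The substantive content is linear independence. I would translate the problem into the language of the associated rewriting system on words in $\{x_1,\dots,x_m\}$, whose rewriting rules are exactly the quadratic relations $x_i x_j \mapsto \sum_{(k,\ell)<(i,j)} c^{k,\ell}_{i,j}\, x_k x_\ell$ for $(i,j)\notin S^{(2)}$. A non-trivial linear dependence among elements of $\mathcal{B}^{(n)}$ modulo $I$ is equivalent to a failure of confluence of this system at degree $n$: two distinct reduction sequences starting from the same word produce two distinct admissible expansions. Termination being already guaranteed by the well-founded order, it suffices to establish local confluence, which by Newman's lemma then yields global confluence.

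The key step is then to apply Bergman's diamond lemma to reduce all local confluence checks to overlaps of minimal length. Local ambiguities in a quadratic rewriting system come in two flavours: disjoint ambiguities, where two reductions act on non-overlapping adjacent pairs of a longer word, and overlap ambiguities, where two reductions share a letter. Disjoint ambiguities always resolve, because the two rewrites act on disjoint tensor factors and therefore commute. Overlap ambiguities involve exactly three consecutive letters $x_i x_j x_k$ with $(i,j),(j,k)\notin S^{(2)}$, so they are precisely the cubic ambiguities. Linear independence of $\mathcal{B}^{(3)}$ in $A_3$ is the statement that each such cubic ambiguity resolves to the same admissible expansion regardless of the order in which $(i,j)$ and $(j,k)$ are reduced; this is the diamond condition at length three.

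The main obstacle is therefore the bookkeeping showing that every local ambiguity in a word of length $n>3$ decomposes into cubic overlaps plus disjoint rewrites, and that global confluence of the resulting system implies the linear independence claim in every degree. The first point is handled by observing that any two reduction rules acting on a word of length $n$ either act on disjoint adjacent pairs or share a single letter, and in the latter case they are confined to a cubic subword whose ambiguity is resolved by hypothesis. The second point is the standard output of the diamond lemma, which gives a canonical normal form for every word; the set of normal forms is exactly $\mathcal{B}$, and distinct elements of $\mathcal{B}$ have distinct normal forms, so they are linearly independent in $A$. This completes the inductive step and establishes the theorem.
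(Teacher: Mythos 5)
Your proposal is correct in substance, and it supplies an actual argument where the paper simply quotes this result from Polishchuk--Positselski and gives no proof of its own. Your route is the classical one via Bergman's diamond lemma: the quadratic relations define a terminating rewriting system (termination from the well-founded lexicographic order on words of fixed length, which is compatible with left and right concatenation), disjoint ambiguities resolve automatically, and since every left-hand side has length two the only overlap ambiguities live on three consecutive letters, so the degree-three hypothesis is exactly what is needed for local confluence; Newman's lemma then gives global confluence and the normal forms --- the admissible monomials --- form a basis. Two points should be made explicit for full rigor. First, you must check that the ideal generated by your rewriting rules $x_i x_j - f_{ij}$, $(i,j)\notin S^{(2)}$, is all of $I$: this uses that $A$ is quadratic and that the admissible degree-two monomials are linearly independent in $A_2$ (a leading-term argument from the definition of $S^{(2)}$ gives $\dim I_2 = m^2 - |S^{(2)}|$, which matches the number of rules). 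Without this, the diamond lemma only yields a basis of $T(V)/I'$ for a possibly smaller ideal $I'\subseteq I$, and linear independence modulo $I'$ does not descend to the further quotient by $I$. Second, the passage from ``$\mathcal{B}^{(3)}$ linearly independent in $A_3$'' to ``every cubic overlap resolves'' is an implication rather than the tautology your phrasing suggests: reducing $x_i x_j x_k$ in the two possible orders yields normal forms $N_1, N_2$ that are both linear combinations of admissible cubic monomials with $N_1 - N_2 \in I_3$, and the hypothesis forces $N_1 = N_2$. With these two remarks inserted the argument is complete and matches the standard proof of the PBW criterion for quadratic algebras.
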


Equivalently, the cubic independence assumption is encoded by the braid-type compatibility condition
\[
\pi^{12}\pi^{23}\pi^{12} = \pi^{23}\pi^{12}\pi^{23},
\]
where the maps \(\pi^{ij}\) apply the quadratic reduction to the \(i\)-th and \(j\)-th tensor positions of \(V^{\otimes 3}\). This operator identity ensures that all overlaps of quadratic relations resolve consistently, guaranteeing global confluence of the rewriting system.

 \begin{proof}
     The result follows by direct application of the PBW theorem.
 \end{proof}

\section{Single-mode structure and proof of the main theorem}

\subsection{Construction of the spaces $W_n$}
Let $K$ be the internal space and let
\[
P_{\mathrm{g}} : K \otimes K \longrightarrow K \otimes K
\]
be the quadratic projector determining the relations. For each $n \geq 0$ define
\[
V_n := K^{\otimes n}.
\]

For $1 \leq k \leq n-1$ we introduce the operators
\[
P^{(n)}_k := \mathrm{id}^{\otimes (k-1)} \otimes P_{\mathrm{g}} \otimes \mathrm{id}^{\otimes (n-k-1)} : V_n \longrightarrow V_n,
\]
acting nontrivially only on the $k$-th and $(k+1)$-th tensor factors.

The space of admissible degree-$n$ monomials is then defined as
\[
W_n := \bigcap_{k=1}^{n-1} \ker P^{(n)}_k \subseteq K^{\otimes n}.
\]

Equivalently, $W_n$ consists of those tensors in $K^{\otimes n}$ that survive all quadratic reductions, i.e. those configurations for which no adjacent pair lies in the generating subspace $\mathrm{im}\,P_{\mathrm{g}}$.

By construction, the family $\{W_n\}_{n\geq 0}$ defines a graded vector space
\[
W := \bigoplus_{n \geq 0} W_n,
\]
and the single-mode Hilbert--Poincaré series is given by
\[
G(t) = \sum_{n \geq 0} (\dim W_n)\, t^n.
\]
\subsection{Factorisation of the Hilbert--Poincaré series}

Let $H \cong \mathbb{C}^d \otimes K$ and let
\[
F = T(H)/\langle R_{\mathrm{g}}\rangle
\]
be the quadratic quotient admitting a PBW basis. By Lemma 2 there exists a graded vector space
\[
W = \bigoplus_{n\geq 0} W_n
\]
such that
\[
F \cong W^{\otimes d}
\]
as graded vector spaces.

For each $n \geq 0$, the homogeneous component $F_n$ decomposes as
\[
F_n \cong \bigoplus_{\substack{n_1+\cdots+n_d=n \\ n_i \geq 0}} 
W_{n_1} \otimes \cdots \otimes W_{n_d}.
\]
Taking dimensions, we obtain
\[
\dim F_n = \sum_{\substack{n_1+\cdots+n_d=n \\ n_i \geq 0}} 
\prod_{i=1}^d \dim W_{n_i}.
\]

Define the Hilbert--Poincaré series
\[
H_F(t) := \sum_{n\geq 0} (\dim F_n)\, t^n,
\qquad
G(t) := \sum_{n\geq 0} (\dim W_n)\, t^n.
\]
Then
\[
H_F(t)
= \sum_{n\geq 0} \left(
\sum_{n_1+\cdots+n_d=n} \prod_{i=1}^d \dim W_{n_i}
\right) t^n
= \left( \sum_{m\geq 0} (\dim W_m)\, t^m \right)^d
= G(t)^d.
\]

This proves that the Hilbert--Poincaré series factorises as
\[
H_F(t) = G(t)^d,
\]
where $G(t)$ is the single-mode series.
\subsection{Partition Theorem}
We recall the partition theorem \cite{MedinaSanchezDakic2024}
 \begin{theorem}[Partition]
Let 
\[
\chi_1(x) \;=\; \sum_{s\in \mathbb{N}_0} a_s\, x^s,
\qquad a_0>0.
\]
Then the symmetric function
\[
\prod_{k=1}^d \chi_1(x_k)
\]
is a \(U(d)\)--character for all \(d\in \mathbb{N}\) if and only if the generating function admits a factorization
\[
\chi_1(x) \;=\; \frac{Q^{-}(x)}{Q^{+}(x)},
\tag{23}
\]
where \(Q^{\pm}(x)\) are integral polynomials whose roots are all positive (for \(Q^{-}\)) or all negative (for \(Q^{+}\)), and satisfying the normalization
\[
Q^{+}(0)=1.
\]
\end{theorem}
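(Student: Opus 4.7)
The plan is to reduce to the Partition Theorem (recalled in the appendix) by bringing a $\mathrm{U}(d)$-character structure on the Fock space $F$, and then to refine the resulting rational form using the Koszul duality provided by Priddy's theorem.

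First, I would invoke Lemma~2: the PBW hypothesis together with the factorisation $H = \mathbb{C}^d \otimes K$ yields an isomorphism $F \cong W^{\otimes d}$ of $\mathrm{U}(1)$-graded vector spaces. Thus the Hilbert--Poincar\'e series factorises as $H_F(t) = G(t)^d$ with $G(t) = \sum_{n\ge 0}(\dim W_n)\,t^n$, a power series with non-negative integer coefficients and $G(0)=1$. Crucially, $G(t)$ is determined entirely by the internal data $(K, W_{\mathrm{sym}}, W_{\mathrm{ext}})$ and does not depend on the choice of $d$, so the same series describes the construction for every $d'\ge 1$.

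Second, I would use $\mathrm{U}(d)$-invariance to promote each homogeneous component $F_n$ to a polynomial $\mathrm{U}(d)$-module. Its character is a non-negative integer combination of Schur polynomials $s_\lambda(z_1,\dots,z_d)$, and restricting to the diagonal maximal torus yields $\prod_{i=1}^d G(z_i)$ by Lemma~2. Varying $d$ --- which is legitimate because the internal data fixes $G(t)$ independently of $d$ --- places us precisely in the hypothesis of the Partition Theorem, whose conclusion gives
\[
G(t) = \frac{Q_-(t)}{Q_+(t)}
\]
with $Q_\pm$ integral polynomials of the stated sign structure and $Q_+(0)=1$.

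Third, I would invoke Priddy's theorem. Theorem~1 identifies the PBW hypothesis with a Yang--Baxter identity on $P_{\mathrm{gen}}$, so both $A = T(H)/\langle R_{\mathrm{gen}}\rangle$ and the single-mode subalgebra $W = T(K)/\langle W_{\mathrm{sym}}\rangle$ are Koszul. The Koszul identity then gives $G^!(t) = 1/G(-t) = Q_+(-t)/Q_-(-t)$. Because the dual construction (using $W_{\mathrm{sym}}^\perp$ on $K^* \cong K$ under the Hermitian form introduced with the annihilation algebra) inherits both PBW and $\mathrm{U}(d)$-invariance, $G^!$ itself must be of Partition-Theorem form $\tilde Q_-(t)/\tilde Q_+(t)$ with $\tilde Q_+(0)=1$.

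Fourth and finally, I would deduce the dichotomy by matching the two expressions for $G^!$. The normalisation $\tilde Q_+(0)=1$ forces $Q_-(0)=c_-=1$, and the simultaneous requirement that the expansions of both $G$ and $G^!$ have non-negative integer coefficients imposes stringent arithmetic constraints on the root multisets $\{\alpha_i\}$ and $\{\beta_j\}$. The main technical obstacle is to show that these constraints are incompatible with having both $\deg Q_+ \ge 1$ and $\deg Q_- \ge 1$: the expected argument analyses the cancellations enforced on the rational function $Q_+(-t)/Q_-(-t)$ by the integrality of $\dim W^!_n$, concluding that exactly one of $Q_\pm$ must be trivial. The transfermionic case ($Q_+=1$, $G=Q_-$) and the transbosonic case ($Q_-=1$, $G=1/Q_+$) then exhaust the possibilities, identifying $G$ as the character of an elementary transtatistics.
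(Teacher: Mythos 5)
The statement you were asked to prove is the Partition Theorem itself, but your proposal never proves it: your second step explicitly \emph{invokes} the Partition Theorem to obtain $G(t)=Q_-(t)/Q_+(t)$, so as a proof of the stated result the argument is circular. In the paper this theorem is not proved either --- it is recalled verbatim from the authors' earlier work \cite{MedinaSanchezDakic2024} --- and the proof environment that follows it in the appendix is really the proof of the main text's Theorem~2 (the quadratic realisation of transtatistics), which likewise just cites the Partition Theorem. What you have written is therefore an expanded proof of that Theorem~2, not of the statement in question. A genuine proof of the Partition Theorem would have to start from the hypothesis that $\prod_{k}\chi_1(x_k)$ decomposes with non-negative integer multiplicities into Schur polynomials for \emph{every} $d$, and extract the rational form of $\chi_1$ from the resulting integrality and positivity constraints on its coefficients; nothing of that kind appears in your proposal.

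Read instead as a proof of Theorem~2, your first two steps coincide with the paper's own two-sentence argument (Lemma~2 gives $H_F=G^d$, the $U(d)$-invariance makes each $F_n$ a polynomial module, and the Partition Theorem applies); your observation that $G$ is fixed by the internal data independently of $d$, which is what legitimises the ``for all $d$'' hypothesis, is a worthwhile point the paper glosses over. But your third and fourth steps, which are meant to upgrade the general ratio $Q_-/Q_+$ to the dichotomy $G=Q_-$ or $G=1/Q_+$, are not carried out: you yourself flag that excluding $\deg Q_+\ge 1$ and $\deg Q_-\ge 1$ simultaneously is ``the main technical obstacle'' and only gesture at where an argument might come from. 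Since that dichotomy is exactly the content that distinguishes the transtatistics realisation theorem from the Partition Theorem, the essential step remains open in your write-up as well.
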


\begin{proof}
     \subsection{$1\rightarrow2:$} By construction: the ordered basis conditions implies the factorization property of the $U(1)$ character and additionally, the $U(d)$ invariance completes the conditions for the application of the partition theorem. The result follows.\\
     \subsection{$2\rightarrow 1:$}
Assume
\[
G(t)=\frac{Q_-(t)}{Q_+(t)},\qquad Q_\pm(t)\in\mathbb Z[t],\quad Q_+(0)=1,
\]
and that all roots of \(Q_+\) are real and strictly positive while all roots of \(Q_-\) are real and strictly negative. We will construct a homogeneous quadratic algebra \(A\) with a PBW basis, equipped with a unitary invariance, whose single-variable Hilbert--Poincaré series equals \(G(t)\).

\medskip

\subsubsection{Step 1. Reduction to the transbosonic case.}
Replacing \(G(t)\) by \(G(-t)^{-1}\) corresponds (at the level of Hilbert series of quadratic algebras) to passing to the Koszul dual: if \(A\) is a quadratic algebra then the Hilbert series of its quadratic dual \(A^!\) is obtained by the standard inversion/grade sign change relation for Koszul algebras. Hence it suffices to construct \(A\) in the case
\[
G(t)=\frac{1}{Q_+(t)}=\prod_{i=1}^r (1-\alpha_i t)^{-1},
\qquad \alpha_i>0,
\]
with \(\alpha_i\) nonzero real numbers (not a priori assumed integral). Once a PBW quadratic realization is obtained in this transbosonic form, the general case follows by duality and the evident functoriality of the construction with respect to quadratic duals.

\medskip

\subsubsection{Step 2. Schur-positive specialization via Edrei--Thoma.}
Let \(\Lambda\) denote the ring of symmetric functions over \(\mathbb{Z}\) and let \(h_n\) be the complete symmetric functions. Define a sequence
\[
a_n := [t^n]\,G(t)\in\mathbb{Z}_{\ge 0},\qquad n\ge0,
\]
and consider the linear functional \(\phi:\Lambda\to\mathbb{Z}\) determined by \(\phi(h_n)=a_n\). The desired conclusion is that \(\phi\) is a Schur-positive specialization (i.e. \(\phi(s_\lambda)\in\mathbb{Z}_{\ge 0}\) for every partition \(\lambda\)).

The correct necessary and sufficient condition for \(\phi\) to be Schur-positive is that the sequence \((a_n)_{n\ge0}\) be a nonnegative totally positive sequence in the sense of Edrei and Thoma. The Edrei--Thoma classification (together with finite-type rationality results of Aissen--Schoenberg--Whitney) asserts that a nonnegative integer sequence \((a_n)\) which is totally positive and whose ordinary generating function is a rational function of finite type admits a canonical factorization of the form
\[
G(t)=\exp(\gamma t)\,\frac{\prod_j (1+\beta_j t)}{\prod_i (1-\alpha_i t)},
\]
with parameters \(\gamma,\alpha_i,\beta_j\ge0\) (counted with multiplicity). Conversely, such a product produces a totally positive sequence \cite{edrei1953generation}.

In our hypothesis \(G(t)=Q_-(t)/Q_+(t)\) is a rational function with all poles strictly positive and zeros strictly negative. The integrality \(a_n\in\mathbb{Z}_{\ge0}\) together with the finite rational form forces the exponential factor to vanish (\(\gamma=0\)) and forces the pole/zero multiplicities to be integers; thus \(G(t)\) can be rewritten as a finite product of factors \((1-\alpha_i t)^{-1}\) (with positive real \(\alpha_i\) repeated according to integer multiplicity) times finite factors \((1+\beta_j t)\) corresponding to negative real zeros (the latter will be absorbed into \(Q_-\)). Hence the Edrei--Thoma theory guarantees that \(\phi\) is Schur-positive and, moreover, that the values \(\phi(s_\lambda)\) are nonnegative integers (the latter follows from interpreting the specialization as evaluation on a finite virtual alphabet with integer multiplicities in the finite rational case).

\medskip

\subsubsection{Step 3. Construction of an exact monoidal functor \(F\).}
Let \(\mathrm{Rep}_{\mathrm{poly}}(GL)\) be the semisimple tensor category of polynomial representations of \(GL\). Its simple objects are indexed by partitions and are the Schur functors \(S_\lambda\). Define vector spaces \(M_\lambda\) with
\[
\dim M_\lambda := \phi(s_\lambda)\in\mathbb{Z}_{\ge0},
\]
and define an additive functor
\[
F:\mathrm{Rep}_{\mathrm{poly}}(GL)\longrightarrow\mathrm{Vect}
\]
by \(F(S_\lambda) := M_\lambda\) and extending additively. The compatibility of \(\phi\) with the Littlewood--Richardson coefficients,
\[
s_\lambda s_\mu = \sum_\nu c_{\lambda\mu}^\nu s_\nu
\quad\Longrightarrow\quad
\phi(s_\lambda)\phi(s_\mu) = \sum_\nu c_{\lambda\mu}^\nu \phi(s_\nu),
\]
provides the numerical multiplicity data necessary to equip the family \((M_\lambda)\) with bilinear maps
\[
M_\lambda\otimes M_\mu \longrightarrow \bigoplus_\nu M_\nu^{\oplus c_{\lambda\mu}^\nu}
\]
which can be chosen (by picking explicit models and coherence isomorphisms) so that \(F\) becomes a (weak) monoidal functor. Because \(\mathrm{Rep}_{\mathrm{poly}}(GL)\) is semisimple, any additive functor defined on simples and extended linearly is exact \cite{fulton}. Therefore \(F\) is exact and monoidal, and it respects the grading by polynomial degree.

\medskip

\subsubsection{Step 4. Apply \(F\) to the symmetric algebra.}
For finite-dimensional vector spaces \(U\) and \(V\) we have the Cauchy decomposition
\[
\mathrm{Sym}(U\otimes V)\cong\bigoplus_{\lambda} S_\lambda(U)\otimes S_\lambda(V).
\]
Applying \(F\) to the \(V\)-factor yields a graded algebra
\[
A := F\big(\mathrm{Sym}(U\otimes V)\big) \cong \bigoplus_{\lambda} S_\lambda(U)\otimes M_\lambda.
\]
Because \(F\) preserves degrees and dimensions by construction, the degree-\(n\) component of \(A\) satisfies \(\dim A_n = a_n\), and therefore the Hilbert--Poincaré series of \(A\) equals \(G(t)\).

\medskip

\subsubsection{Step 5. Quadratic presentation and preservation of Koszulity / PBW.}
The algebra \(\mathrm{Sym}(U\otimes V)\) is quadratic and Koszul; write it as
\[
\mathrm{Sym}(U\otimes V)\cong T(H)/\langle R\rangle,
\]
with \(H=(U\otimes V)_1\) and \(R\subset H\otimes H\) the quadratic relations. Since \(F\) is exact and monoidal it sends the degree-1 generating object \(H\) to \(F(H)\) and the subobject of quadratic relations \(R\) to a subspace \(F(R)\subset F(H)\otimes F(H)\), yielding a quadratic presentation
\[
A \cong T\big(F(H)\big)/\langle F(R)\rangle.
\]

To conclude that \(A\) is Koszul and admits a PBW basis we invoke the theory of Koszul algebras in monoidal abelian categories and preservation results for Koszulity under exact monoidal functors: exact monoidal functors that preserve the quadratic presentation carry Koszul complexes to Koszul complexes and preserve the vanishing/ext properties that characterize Koszul algebras. Thus Koszulness and the PBW property descend from \(\mathrm{Sym}(U\otimes V)\) to \(A\) under our hypotheses; see \cite{PolishchukPositselski2005} for the systematic treatment of Koszul algebras, quadratic duality, and preservation under suitable functors.

\medskip

\subsubsection{Step 6. Presentation and unitary invariance.}
By standard theory of quadratic algebras there exist \(H'\) and \(R'\subset H'\otimes H'\) with
\[
A \cong T(H')/\langle R'\rangle.
\]
The construction is functorial in the \(U\)-factor, since \(F\) acts only on the \(V\)-factor; hence the natural \(GL(U)\)-action on \(\mathrm{Sym}(U\otimes V)\) is transported to \(A\). Restricting this action to the unitary subgroup \(U(d)\) (for \(d=\dim U\)) yields a \(U(d)\)-invariant quadratic algebra. Consequently \(A\) is a homogeneous quadratic, \(U(d)\)-invariant algebra with a PBW basis and single-variable Hilbert--Poincaré series \(G(t)\), completing the proof.
\end{proof}

\section{Creation--annihilation structure}

\subsection{Dual quadratic algebra}

Let $H$ be a finite-dimensional complex vector space and
\[
A := T(H)/\langle R_{\mathrm{g}}\rangle
\]
the quadratic algebra defined by the generating subspace
\[
R_{\mathrm{g}} \subset H \otimes H,
\qquad
H \otimes H = R_{\mathrm{g}} \oplus R_{\mathrm{g}}^\perp.
\]

Let $H^*$ denote the dual space, equipped with the natural pairing
\[
\langle \cdot, \cdot \rangle : (H^* \otimes H^*) \times (H \otimes H) \to \mathbb{C}.
\]

We define the dual quadratic subspace as the annihilator of $R_{\mathrm{g}}^\perp$:
\[
R_{\mathrm{g}}^* := \left\{ \eta \in H^* \otimes H^* \;\middle|\; \langle \eta, v \rangle = 0 \ \forall\, v \in R_{\mathrm{g}}^\perp \right\}.
\]
Equivalently,
\[
R_{\mathrm{g}}^* = (R_{\mathrm{g}}^\perp)^\circ \subset H^* \otimes H^*.
\]

If $P_{\mathrm{g}} : H \otimes H \to R_{\mathrm{g}}$ denotes the projector onto the generating subspace, we define its adjoint
\[
P_{\mathrm{g}}^* : H^* \otimes H^* \to H^* \otimes H^*
\]
by
\[
\langle P_{\mathrm{g}}^*(\eta), v \rangle = \langle \eta, P_{\mathrm{g}}(v) \rangle,
\qquad
\eta \in H^* \otimes H^*,\ v \in H \otimes H.
\]
Then
\[
R_{\mathrm{g}}^* = \mathrm{im}\, P_{\mathrm{g}}^*.
\]

The dual quadratic algebra (annihilation algebra) is defined as
\[
B := T(H^*)/\langle R_{\mathrm{g}}^* \rangle.
\]

By construction, the quadratic relations of $B$ are orthogonal to the admissible quadratic monomials of $A$, and the pair $(A,B)$ is related by duality at the level of quadratic data.

\subsection{Descent of the cross map}

Let
\[
A = T(H)/\langle R\rangle, 
\qquad 
B = T(H^*)/\langle S\rangle
\]
be quadratic algebras, where $R \subset H \otimes H$ and 
$S \subset H^* \otimes H^*$ are the generating subspaces.

Let
\[
C : H^* \otimes H \longrightarrow H \otimes H^*
\]
be a linear map (the cross map), and let
\[
\tilde{\tau} : T(H^*) \otimes T(H) \longrightarrow T(H) \otimes T(H^*)
\]
be its multiplicative extension, defined recursively by requiring
\[
\tilde{\tau}(\eta \otimes x) = C(\eta \otimes x),
\qquad
\eta \in H^*,\ x \in H,
\]
and compatibility with concatenation.

In order for $\tilde{\tau}$ to descend to a well-defined map
\[
\tau : B \otimes A \longrightarrow A \otimes B,
\]
it is necessary and sufficient that the defining ideals be stable under $\tilde{\tau}$. Concretely, this means:

\[
\tilde{\tau}(H^* \otimes \langle R\rangle) \subseteq \langle R\rangle \otimes T(H^*),
\]
\[
\tilde{\tau}(\langle S\rangle \otimes H) \subseteq T(H) \otimes \langle S\rangle.
\]

Equivalently, since the ideals are generated in degree two, it suffices to impose the conditions at the quadratic level:
\[
\tilde{\tau}(H^* \otimes R) \subseteq R \otimes H^*,
\]
\[
\tilde{\tau}(S \otimes H) \subseteq H \otimes S.
\]

Under these conditions, the map $\tilde{\tau}$ passes to the quotient and defines a bilinear map
\[
\tau : B \otimes A \to A \otimes B,
\]
which encodes the exchange relations between annihilation and creation generators.
\subsection{Mixed relations and bracket form}

Let $A$ and $B$ be the creation and annihilation algebras constructed above, and let
\[
\tau = \tilde{\tau} + G : H^* \otimes H \longrightarrow H \otimes H^* \oplus \mathbb{C}
\]
be the full cross map, where $G : H^* \otimes H \to \mathbb{C}$ is the vacuum two-point function.

Fix a basis $\{X_{i\alpha}\}$ of $H$ and dual basis $\{X^*_{i\alpha}\}$ of $H^*$. The mixed relations are determined by the exchange rule
\[
X_{i\alpha} X^\dagger_{j\beta}
= \sum_{\gamma,\delta}
C^{\gamma\delta}_{\alpha\beta}\,
X^\dagger_{j\delta} X_{i\gamma}
+ g_{\alpha\beta}\,\delta_{ij}\, \mathbf{1},
\]
where $C$ encodes the homogeneous part of $\tau$ and $g_{\alpha\beta}$ is the Hermitian form on the internal space.

It is convenient to rewrite all quadratic relations in a unified way using a bracket adapted to the internal projectors. Let $A^{\eta\lambda}_{\alpha\beta}$ and $B^{\eta\lambda}_{\alpha\beta}$ be the tensors determined by the symmetric and antisymmetric components of the quadratic data. For any generators $P_{i\alpha}, Q_{j\beta}$ we define
\[
[P_{i\alpha}, Q_{j\beta}]_{A,B}
:=
\sum_{\eta,\lambda}
A^{\eta\lambda}_{\alpha\beta}\,
\{P_{i\eta}, Q_{j\lambda}\}
+
\sum_{\eta,\lambda}
B^{\eta\lambda}_{\alpha\beta}\,
[P_{i\eta}, Q_{j\lambda}],
\]
where $\{\cdot,\cdot\}$ and $[\cdot,\cdot]$ denote the anti\-commutator and commutator, respectively.

In terms of this bracket, the defining relations take the uniform form
\[
[X^\dagger_{i\alpha}, X^\dagger_{j\beta}]_{A,B} = 0,
\]
\[
[X_{i\alpha}, X_{j\beta}]_{A,B} = 0,
\]
\[
[X_{i\alpha}, X^\dagger_{j\beta}]_{A,B}
= g_{\alpha\beta}\,\delta_{ij}\,\mathbf{1}.
\]

These relations generalise the canonical commutation and anticommutation relations and encode the full quadratic structure of the creation–annihilation algebra in a form adapted to the internal projectors.
\subsection{Representation of $\mathfrak{gl}(d)$}
\begin{proof}
Recall
\[
J_{ij} := \sum_{\alpha,\beta} g^{\beta\alpha}\, X_{i\alpha}^{\dagger} X_{j\beta},
\]
and the mixed relation
\[
X_{i\alpha} X^{\dagger}_{j\beta}
    = \sum_{\gamma,\delta} C^{\gamma\delta}{}_{\alpha\beta}\,
      X^{\dagger}_{j\gamma} X_{i\delta}
      + g_{\alpha\beta}\, \delta_{ij}\, \mathbf{1}.
\]
Pure--creation and pure--annihilation exchange relations are governed by the operators \(S\) and \(R\), respectively. By assumption, the Hermitian form \(g\) is invariant under all exchange maps \(C\), \(S\), and \(R\), meaning these maps act unitarily on the Hilbert spaces defined by \(g\) and its tensor powers.

We now prove the three identities.

\subsection*{1. Proof of (i)}

Fix \(i,j,k,\sigma\). Using the definition of \(J_{ij}\),
\[
[J_{ij}, X^{\dagger}_{k\sigma}]
    = \sum_{\alpha,\beta} g^{\beta\alpha}
      \left( X^{\dagger}_{i\alpha} X_{j\beta} X^{\dagger}_{k\sigma}
           - X^{\dagger}_{k\sigma} X^{\dagger}_{i\alpha} X_{j\beta} \right)
    = T_1 - T_2.
\]

\paragraph{Term \(T_1\).}
Using the mixed relation
\[
X_{j\beta} X^{\dagger}_{k\sigma}
= \sum_{\gamma,\delta} C^{\gamma\delta}{}_{\beta\sigma}\,
      X^{\dagger}_{k\gamma} X_{j\delta}
  + g_{\beta\sigma}\, \delta_{jk} \mathbf{1},
\]
we obtain
\[
T_1
= \sum_{\alpha,\beta,\gamma,\delta}
      g^{\beta\alpha} C^{\gamma\delta}{}_{\beta\sigma}\,
      X^{\dagger}_{i\alpha} X^{\dagger}_{k\gamma} X_{j\delta}
  + \delta_{jk} \sum_{\alpha,\beta} g^{\beta\alpha} g_{\beta\sigma} X^{\dagger}_{i\alpha}.
\]
Since \(\sum_{\beta} g^{\beta\alpha} g_{\beta\sigma} = \delta^\alpha_{\sigma}\), the inhomogeneous term reduces to
\[
\delta_{jk} X^{\dagger}_{i\sigma}.
\]

\paragraph{Term \(T_2\).}
Using the pure–creation exchange relation
\[
X^{\dagger}_{k\sigma} X^{\dagger}_{i\alpha}
    = \sum_{\gamma,\delta}
      S^{\gamma\delta}{}_{\sigma\alpha}\,
      X^{\dagger}_{i\gamma} X^{\dagger}_{k\delta},
\]
we obtain
\[
T_2
    = \sum_{\alpha,\beta,\gamma,\delta}
        g^{\beta\alpha}\,
        S^{\gamma\delta}{}_{\sigma\alpha}\,
        X^{\dagger}_{i\gamma} X^{\dagger}_{k\delta} X_{j\beta}.
\]

\paragraph{Cancellation of cubic terms.}

The Hermitian form \(g_{\alpha\beta}\) on \(K\) provides an antilinear identification
\(K \simeq K^{*}\), and therefore identifies the mixed spaces
\(K^{*}\otimes K\) and \(K\otimes K^{*}\) with \(K\otimes K\).  
After transporting the mixed exchange map \(C\) to an operator on
\(K\otimes K\) via this identification, we equip \(K\otimes K\) with the
product inner product
\[
\langle e_\alpha \otimes e_\sigma,\; e_\gamma \otimes e_\delta\rangle_g
    = g_{\alpha\gamma}\, g_{\sigma\delta}.
\]
The adjointness assumption states that, with respect to this inner
product, the transported mixed map \(C\) is adjoint to the pure–creation
map \(S\).  Expanding this adjointness in components yields
\[
\sum_{\beta} g^{\beta\alpha}\,
      C^{\gamma\delta}{}_{\beta\sigma}
    =
    \sum_{\eta}
      S^{\gamma\alpha}{}_{\sigma\eta}\,
      g_{\delta\eta},
\tag{A.1}
\]
which is the identity needed to compare the cubic parts of \(T_1\) and \(T_2\).
Using \((\mathrm{A}.1)\), the coefficients of each monomial
\(X^{\dagger}_{i\alpha} X^{\dagger}_{k\gamma} X_{j\delta}\) agree in \(T_1\) and \(T_2\),
so all cubic terms cancel.  Hence,
\[
T_1 - T_2 = \delta_{jk} X^{\dagger}_{i\sigma},
\]
and we obtain
\[
[J_{ij}, X^{\dagger}_{k\sigma}] = \delta_{jk} X^{\dagger}_{i\sigma}.
\]

\subsection*{2. Proof of (ii)}

The computation for the annihilation operators is analogous.  Using the
mixed relation in the opposite order,
\[
X_{k\sigma} X^{\dagger}_{i\alpha}
    = \sum_{\gamma,\delta}
      C^{\gamma\delta}{}_{\sigma\alpha}\,
      X^{\dagger}_{i\gamma} X_{k\delta}
      + g_{\sigma\alpha}\, \delta_{ki}\, \mathbf{1},
\]
and the pure–annihilation exchange relation
\[
X_{j\beta} X_{k\sigma}
    = \sum_{\gamma,\delta}
      R^{\gamma\delta}{}_{\beta\sigma}\,
      X_{k\gamma} X_{j\delta},
\]
we obtain
\[
\begin{aligned}
[J_{ij}, X_{k\sigma}]
  &=
    \sum_{\alpha,\beta,\gamma,\delta}
    g^{\beta\alpha}
    \left(
        R^{\gamma\delta}{}_{\beta\sigma}\,
            X^{\dagger}_{i\alpha} X_{k\gamma} X_{j\delta}
        \;-\;
        C^{\gamma\delta}{}_{\sigma\alpha}\,
            X^{\dagger}_{i\gamma} X_{k\delta} X_{j\beta}
    \right)
    \;-\;
    \delta_{ik} X_{j\sigma}.
\end{aligned}
\]

The \(g\)-invariance of \(C\) and \(R\), expressed by the analogue of
\((\mathrm{A}.1)\) obtained after transporting \(C\) and \(R\) to
\(K\otimes K\), again ensures that all cubic terms cancel.  Therefore,
\[
[J_{ij}, X_{k\sigma}] = -\,\delta_{ik}\, X_{j\sigma}.
\]

\subsection*{3. Proof of (iii)}

Using (i) and (ii),
\[
[J_{ij}, J_{kl}]
   = \sum_{\alpha,\beta} g^{\beta\alpha}
      \left( [J_{ij}, X^{\dagger}_{k\alpha}] X_{l\beta}
           + X^{\dagger}_{k\alpha} [J_{ij}, X_{l\beta}]
      \right).
\]
Substituting
\[
[J_{ij}, X^{\dagger}_{k\alpha}] = \delta_{jk} X^{\dagger}_{i\alpha}, 
\qquad
[J_{ij}, X_{l\beta}] = -\delta_{il} X_{j\beta},
\]
gives
\[
[J_{ij}, J_{kl}]
   = \delta_{jk} J_{il} - \delta_{il} J_{kj}.
\]

Thus the operators \(J_{ij}\) satisfy the commutation relations of the Lie algebra \(\mathfrak{gl}(d)\).

 \end{proof}

\end{document}